\theoremstyle{plain}
\newtheorem{theorem}{Theorem}
\newtheorem{lemma}[theorem]{Lemma}
\newtheorem{remark}[theorem]{Remark}
\definecolor{OXO-emph}{RGB}{153,0,0}
\title{Expansion Coding for Channel and Source Coding}
\author{\IEEEauthorblockN{Hongbo~Si, O.~Ozan~Koyluoglu, Kumar~Appaiah
and Sriram~Vishwanath
\thanks{
 The material in this paper was presented in part at the 2012 IEEE International Symposium on Information Theory, Boston, MA, Jul. 2012, and in part at the 2014 IEEE International Symposium on Information Theory, Honolulu, HI, Jul. 2014.}
\thanks{H. Si and S. Vishwanath are with the Laboratory for Informatics, Networks, and Communications,
Wireless Networking and Communications Group,
The University of Texas at Austin, Austin, TX 78712.
Email: sihongbo@mail.utexas.edu, sriram@austin.utexas.edu.}
\thanks{O. O. Koyluoglu is with the
Department of Electrical and Computer Engineering,
The University of Arizona, Tucson, AZ 85721.
Email: ozan@email.arizona.edu.}
\thanks{K. Appaiah is with the Department of Electrical Engineering, Indian Institute of Technology Bombay. Email: akumar@ee.iitb.ac.in.}
}}
\begin{document}
\maketitle

\begin{abstract}
A general method of coding over expansion is proposed,
which allows one to reduce the highly non-trivial problems
of coding over analog channels and compressing analog sources to a set of much simpler subproblems, coding over discrete
channels and compressing discrete sources. More specifically, the focus of this paper is on the additive exponential
noise (AEN) channel, and lossy compression of exponential sources.
Taking advantage of the essential decomposable property of these channels (sources), the proposed expansion method allows for mapping of these problems to coding over parallel channels (respectively, sources), where each level is modeled as an independent coding problem over discrete alphabets. Any feasible solution to the resulting optimization problem after expansion corresponds to an achievable scheme of the original problem.
Utilizing this mapping, even for the cases where the optimal solutions are
difficult to characterize, it is shown that the expansion coding
scheme still performs well with appropriate choices of parameters.
More specifically, theoretical analysis and numerical results reveal that expansion coding achieves the capacity of AEN channel in the high SNR regime. It is also shown that for lossy compression, the achievable rate distortion pair by expansion coding approaches to the Shannon limit in the low distortion region.
Remarkably, by using capacity-achieving codes with low encoding and decoding complexity that are originally designed for discrete alphabets, for instance polar codes, the proposed
expansion coding scheme allows for designing low-complexity analog channel and source codes.
\end{abstract}


\section{Introduction}
\label{sec:Introduction}

\subsection{Background}

The field of channel coding was started with Shannon's famous theorem proposed in 1948 \cite{Shannon:IT48}, which shows that the channel capacity upper bounds the amount of information that can be reliably transmitted over a noisy communication channel. After this result, seeking for practical coding schemes that could approach channel capacity became a central objective for researchers. On the way from theory to practice, many coding schemes are proposed. Different types of codes emerge in improving the performance, giving consideration to the trade-off between coding complexity and error decay rate.

The history of channel coding traces back to the era of algebraic
coding, including the well-known Hamming codes \cite{Hamming:code50},
Golay codes \cite{Golay:code49}, Reed-Muller codes
\cite{Muller:code54}\cite{Reed:code54}, Reed-Solomon codes
\cite{Reed:RScode60}, lattice codes \cite{Conway:Lattice88}, and
others \cite{Williams:Coding1983}. However, despite enabling
significant advances in code design and construction, algebraic coding
did not turn out to be the most promising means to approach the
Shannon limit. The next era of probabilistic coding considered
approaches that involved optimizing code performance as a function of
coding complexity. This line of development included convolutional
codes \cite{Forney:convolutional70}, and concatenated codes
\cite{Forney:Contatenated66} at earlier times, as well as turbo codes
\cite{Berrou:Turbo93} and low-density parity-check (LDPC) codes
\cite{Mackay:LDPC95}\cite{Sipser:LDPC96} afterwards. Recently, polar codes \cite{Arikan:Channel08} have
been proved to achieve Shannon limit of binary-input symmetric
channels with low encoding and decoding complexity. In another recent
study \cite{Perry:Spinal11}\cite{Balakrishnan:Spinal12}, new types of
rateless codes, viz. spinal codes, are proposed to achieve the channel
capacity.

Another well-studied (and practically valuable) research direction in information theory is the problem of compression of continuous-valued sources. Given the increased importance of voice, video and other multimedia, all of which are typically ``analog'' in nature, the value associated with low-complexity algorithms to compress continuous-valued data is likely to remain significant in the years to come. For discrete-valued ``finite-alphabet'' problems, the associated coding theorem \cite{Cover:IT1991} and practically-meaningful coding schemes are well known. Trellis based quantizers \cite{Viterbi:Trellis74} are the first to achieve the rate distortion trade-off, but with encoding complexity scaling exponentially with the constraint length. Later, Matsunaga and Yamamoto \cite{Matsunaga:LDPC2003} show that  a low density parity check (LDPC) ensemble, under suitable conditions on the ensemble structure, can achieve the rate distortion bound using an optimal decoder. \cite{Wainwright:LDGM2010} shows that low density generator matrix (LDGM) codes, as the dual of LDPC codes, with suitably irregular degree distributions, empirically perform close to the Shannon rate-distortion bound with message-passing algorithms. More recently, polar codes are shown to be the first provably rate distortion limit achieving codes with low complexity \cite{Korada:Source10}. In the case of analog sources, although both practical coding schemes as well as theoretical analysis are very heavily studied, a very limited literature exists that connects the theory with low-complexity codes. The most relevant literature in this context is on lattice compression and its low-density constructions \cite{Zamir:Lattice09}. Yet, this literature is also limited in scope and application.

\subsection{Contributions and Organization}

The problem of coding over analog noise channels is highly non-trivial
in general. To this end, a method of modulation is commonly utilized
to map discrete inputs to analog signals for transmission through the
physical channel \cite{Costello:Channel07}. In this paper, we focus on
designing and coding over such mappings. In particular, we propose a
new coding scheme for general analog channels with moderate coding
complexity based on an expansion technique, where channel noise is
perfectly or approximately represented by a set of independent
discrete random variables (see
Fig.~\ref{fig:Expansion_Framework}). Via this representation, the
problem of coding over an analog noise channel is reduced to that of
coding over parallel discrete channels. We focus on additive
exponential noise (AEN), and we show that the Shannon limit, i.e., the
capacity, is achievable for AEN channel in the high SNR regime. More precisely, for any given $\epsilon<1$, it is shown that the gap to capacity is at most $5\epsilon \log(e)$ when at least $-2\log(e)+\log(\textrm{SNR})$ number of levels are utilized in the coding scheme together with embedded binary codes. Generalizing results to $q$-ary alphabets, we show that this gap can be reduced more. The main advantage of the proposed method lies on its complexity
inheritance property, where the encoding and decoding complexity of
the proposed schemes follow that of the embedded capacity achieving
codes designed for discrete channels, such as polar codes and spinal codes.

\begin{figure}[t]
 \centering
 \includegraphics[scale=1]{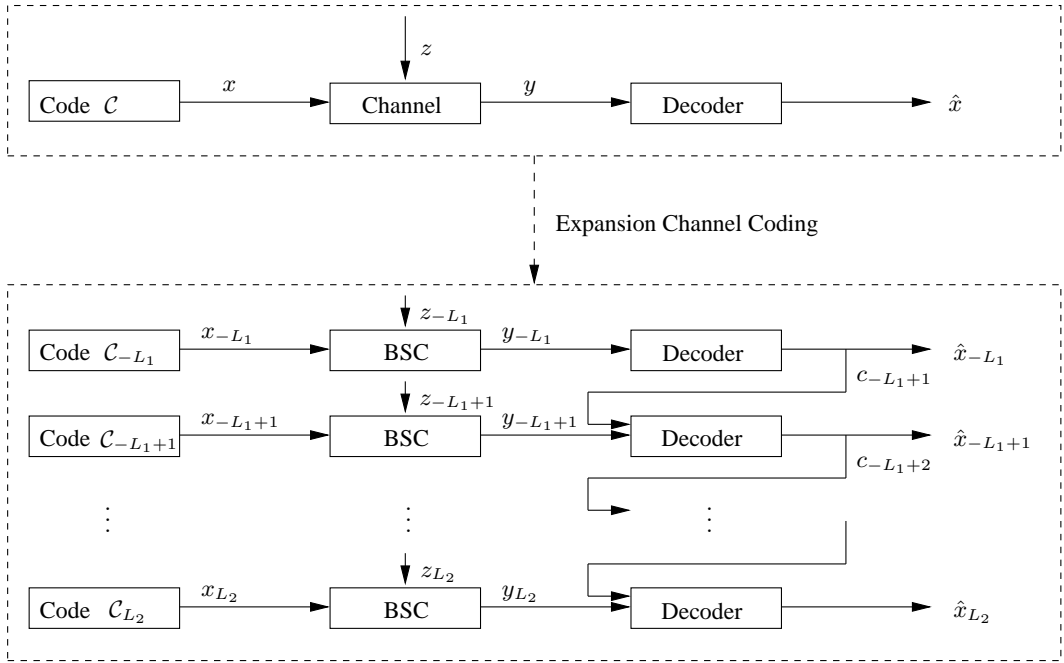}
 \caption{{\bf Illustration of expansion channel coding framework.} In this example, an analog noise channel is expanded into a set of discrete channels with index from $-L_1$ to $L_2$. Channel noise $z$ is considered as its binary expansion $z=\cdots z_{L_2}\cdots z_1z_0.z_{-1}\cdots z_{-L_1}\cdots$, and similar expansions are adopted to channel input and output. Carries exist between neighboring levels.
}
\label{fig:Expansion_Framework}
\end{figure}

In the second part of this paper, we present an expansion coding scheme for compressing of analog sources. This is a dual problem to the channel coding case, and we utilize a similar approach where we consider expanding exponential sources into binary sequences, and coding over the resulting set of parallel discrete sources. We show that this scheme's performance can get very close to the rate distortion limit in the low distortion regime (i.e., the regime of interest in practice). More precisely, the gap between the rate of the proposed scheme and the theoretical limit is shown to be within a constant gap ($5\log(e)$) for any distortion level $D$ when at least $-2\log(\lambda ^2 D)$ number of levels are utilized in the coding scheme (where, $1/\lambda$ is the mean of the underlying exponential source). Moreover, this expansion coding scheme can be generalized to Laplacian sources (two-sided symmetric exponential distribution), where the sign bit is considered separately and encoded perfectly to overcome the difficulty of source value being negative.

The rest of paper is organized as follows. Related work is
investigated and summarized in Section~\ref{sec:Related_Work}. The
expansion coding scheme for channel coding is detailed and evaluated in Section~\ref{sec:Channel_Coding}. The
expansion source coding framework and its application to exponential
sources are demonstrated in Section~\ref{sec:Source_Coding}. Finally,
the paper is concluded in Section~\ref{sec:Conclusion}.


\section{Related Work}
\label{sec:Related_Work}

Multilevel coding is a general coding method designed for analog noise
channels with a flavor of expansion \cite{Forney:Sphere00}. In
particular, a lattice partition chain
$\Lambda_1/\cdots/\Lambda_{r-1}/\Lambda_r$ is utilized to represent
the channel input, and, together with a shaping technique, the
reconstructed codeword is transmitted to the channel. It has been
shown that optimal lattices achieving Shannon limit exist. However, the
encoding and decoding complexity for such codes is high, in
general. In the sense of representing the channel input, our scheme is
coincident with multilevel coding by choosing
$\Lambda_1=q^{-L_2}\mathbb{Z}$, \ldots, $\Lambda_r=q^{L_1}\mathbb{Z}$,
for some $L_1,L_2\in\mathbb{Z}^+$, where coding of each level is over
$q$-ary finite field (see Fig.~\ref{fig:Multilevel_Framework}). The
difference in the proposed method is that besides representing the
channel input in this way, we also ``expand'' the channel noise, such
that the coding problem for each level is more suitable to solve by
adopting existing discrete coding schemes with moderate coding
complexity. Moreover, by adapting the underlying codes to channel-dependent variables, such as carries, the Shannon limit is shown to be
achievable by expansion coding with moderate number of expanded
levels.

\begin{figure}[t]
 \centering
 \includegraphics[scale=1]{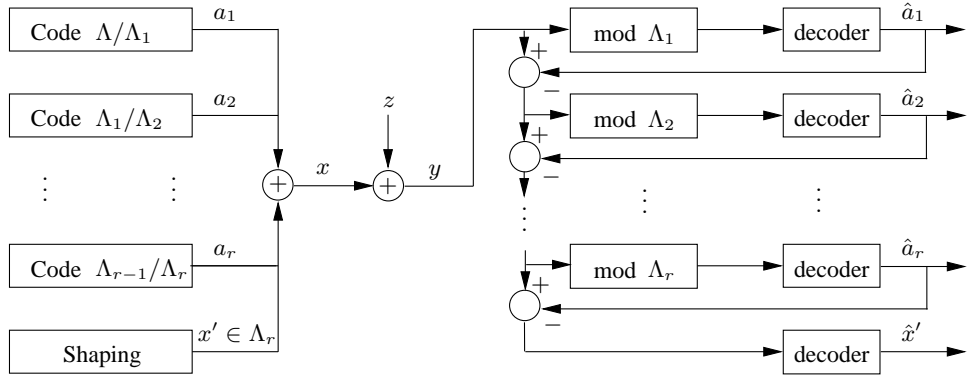}
 \caption{{\bf Illustration of multilevel coding framework.} In this example, multilevel coding scheme is illustrated. Compared to expansion coding in Fig.~\ref{fig:Expansion_Framework}, only channel input is expressed by multi-levels, but not the channel noise.
}
\label{fig:Multilevel_Framework}
\end{figure}

The deterministic model, proposed in \cite{Avestimehr:Wireless11}, is another framework to study analog noise channel coding problems, where the basic idea
is to construct an approximate channel for which the
transmitted signals are assumed to be noiseless above a certain noise level. This approach has proved to be very effective in analyzing the capacity of networks. In particular, it has been shown that this framework perfectly characterizes degrees of freedom of point-to-point AWGN channels, as well as some multi-user channels. In this sense, our expansion coding scheme can be seen as a generalization
of these deterministic approaches. Here, the effective noise
in the channel is carefully calculated and the system takes
advantage of coding over the noisy levels at any SNR. This
generalized channel approximation approach can be useful in
reducing the large gaps reported in the previous works, because the noise
approximation in our work is much closer to the actual distribution as compared
to that of the deterministic model (see Fig.~\ref{fig:Expansion_Deterministic_Compare}).

\begin{figure}[t]
 \centering
 \includegraphics[scale=1]{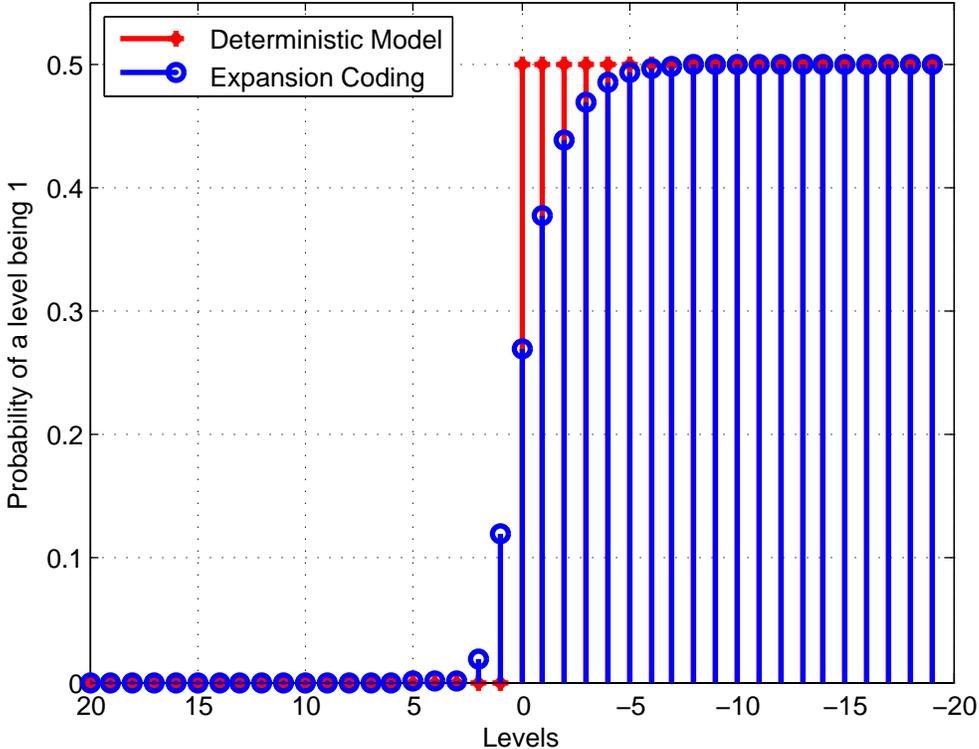}
 \caption{{\bf Comparison of noise models between expansion coding and deterministic model.} The noise models of each level for expansion coding and deterministic model are illustrated. Deterministic model cut the noise to a certain level, and expansion coding has a smooth transaction regime.
}
\label{fig:Expansion_Deterministic_Compare}
\end{figure}

There have been many attempts to utilize discrete codes for analog
channels (beyond simple modulation methods). For example, after the
introduction of polar codes, considerable attention has been directed
towards utilizing their low complexity property for analog channel
coding. A very straightforward approach is to use the central limit
theorem, which says that certain combinations of i.i.d. discrete
random variables converge to a Gaussian distribution. As reported in
\cite{Balakrishnan:Spinal12} and \cite{Abbe:Polar11}, the capacity of
AWGN channel can be achieved by coding over large number of BSCs,
however, the convergence rate is linear which limits its application
in practice. To this end, \cite{Abbe:Polar12} proposes a MAC based
scheme to improve the convergence rate to exponential, at the expense of having a
much larger field size. A newly published result in
\cite{Seidl:Polar13} attempts to combine polar codes with multilevel
coding, however many aspects of this optimization of polar-coded
modulation still remain open.  Along the direction of this research,
we also try to utilize capacity achieving discrete codes to
approximately achieve the capacity of analog channels.

The additive exponential noise (AEN) channel is of particular interest as it
models worst-case noise given a mean and a non-negativity constraint on
noise \cite{Verdu:Exponential96}. In addition, the AEN model naturally
arises in non-coherent communication settings, and in optical communication
scenarios. (We refer to \cite{Verdu:Exponential96} and
\cite{Martinez:Communication11} for an extensive discussion on the
AEN channel.) Verd{\'u} derived the optimal input distribution and the capacity
of the AEN channel in \cite{Verdu:Exponential96}. Martinez, on the other
hand, proposed the pulse energy modulation scheme, which can be seen as a
generalization of amplitude modulation for the Gaussian channels.
In this scheme, the constellation symbols are chosen as $c (i-1)^l$,
for $i=1, \cdots, 2^M$ with a constant $c$, and it is shown that
the information rates obtained from this constellation can achieve an
energy (SNR) loss of $0.76$ dB (with the best choice of
$l=\frac{1}{2}(1+\sqrt{5})$) compared to the capacity in the high SNR
regime. Another constellation technique for this coded modulation
approach is recently considered in \cite{LeGoff:Capacity11}, where
log constellations are designed such that
the real line is divided into ($2M-1$) equally probable intervals.
$M$ of the centroids of these intervals are chosen as constellation
points, and, by a numerical computation of the mutual information, it
is shown that these constellations can achieve within a $0.12$ dB SNR gap
in the high SNR regime. Our approach, which achieves arbitrarily close to
the capacity of the channel, outperforms these previously proposed
modulation techniques.

In the domains of image compression and speech coding, Laplacian and exponential distributions are widely adopted as natural models of correlation between pixels and amplitude of voice \cite{Gallager:Information68}. Exponential distribution is also fundamental in characterizing continuous-time Markov processes \cite{Verdu:Exponential96}. Although the rate distortion functions for both have  been known for decades, there is still a gap between theory and existing low-complexity coding schemes. The proposed schemes, primarily for the medium to high distortion regime, include the classical scalar and vector quantization schemes \cite{gray1998quantization}, and Markov chain Monte Carlo (MCMC) based approach in \cite{Baron:MCMC12}. However, the understanding of low-complexity coding schemes, especially for the low-distortion regime, remains limited. To this end, our expansion source coding scheme aims to approach the rate distortion limit with practical encoding and decoding complexity. By expanding the sources into independent levels, and using the decomposition property of exponential distribution, the problem has been remarkably reduced to a set of simpler subproblems, compression for discrete sources.


\section{Expansion Channel Coding}
\label{sec:Channel_Coding}


\subsection{Intuition}

In general, expansion channel coding is a scheme of reducing the problem of coding over an
analog channel to coding over a set of discrete channels. In particular, we
consider the additive noise channel given by
\begin{equation}
\mathsf{Y}_i=\mathsf{X}_i+\mathsf{Z}_i,\; i=1,\cdots,n,\label{equ:General_Channel}
\end{equation}
where $\mathsf{X}_i$ are channel inputs with alphabet $\mathcal{X}$ (possibly having channel input requirements, such as certain moment constraints);
$\mathsf{Y}_i$ are channel outputs; $\mathsf{Z}_i$ are additive noises
independently and identically distributed with a continuous probability density function;
$n$ is block length. We represent the inputs as $\mathsf{X}_{1:n}\triangleq \{\mathsf{X}_1, \cdots,\mathsf{X}_n\}$. (Similar notation is used for other variables throughout the sequel.)

When communicating,
the transmitter conveys one of the messages, $\mathsf{M}$, which is
uniformly distributed in $\mathcal{M}\triangleq \{1, \cdots, 2^{nR}\}$; and it does so by
mapping the message to the channel input using encoding
function $\phi(\cdot):\mathcal{M} \to \mathcal{X}^n$ such that $\mathsf{X}_{1:n}(\mathsf{M})=\phi(\mathsf{M})$.
The decoder uses the decoding function $\psi(\cdot)$ to map its channel
observations to an estimate of the message. Specifically,
$\psi(\cdot):\mathcal{Y}^n \to \mathcal{M}$, where the estimate is denoted
by $\hat{\mathsf{M}} \triangleq \psi(\mathsf{Y}_{1:n})$. A rate $R$ is said to be achievable,
if the average probability of error defined by
\begin{equation}
P_{\textrm{e}} \triangleq \frac{1}{|\mathcal{M}|} \sum\limits_{\mathsf{M}\in \mathcal{M}}
\mathrm{Pr}\{\hat{\mathsf{M}}\neq \mathsf{M} |\; \mathsf{M} \textrm{ is sent.}\}\nonumber
\end{equation}
can be made arbitrarily small for large $n$.
The capacity of this channel is denoted by $C$, which is
the maximum achievable rate $R$, and its corresponding optimal input distribution is denoted as $f_{\mathsf{X}}^*(x)$.

Our proposed coding scheme is based on the idea that by ``expanding'' the channel noise (i.e., representing it by its $q$-ary expansion),
an approximate channel can be constructed, and proper coding schemes can be adopted to each level in this representation. If the approximation is close enough, then the coding schemes that are optimal for each level can be translated to an effective one for the original channel.
More formally, consider the original noise $\mathsf{Z}$ and its approximation $\hat{\mathsf{Z}}$, which is defined by the truncated $q$-ary expansion of $\mathsf{Z}$. For this moment, we simply take $q=2$ (i.e., considering binary expansion), and leave the general case for later discussion.
\begin{equation}
\hat{\mathsf{Z}}\triangleq \mathsf{Z}^{\textrm{sign}}\sum_{l=-L_1}^{L_2} 2^l \mathsf{Z}_l,\nonumber
\end{equation}
where $\mathsf{Z}^{\textrm{sign}}$ represents the sign of $\mathsf{Z}$, taking a value from $\{-,+\}$; $\mathsf{Z}_l$'s are mutually independent Bernoulli random variables. By similarly expanding the channel input, we convert the problem of coding over analog channels to coding over a set of binary discrete channels. This mapping is highly advantageous, as capacity achieving discrete codes can be adopted for coding over the constructed binary channels. Assume the input distributions for sign channel and discrete channel at $l$ are represented by $\mathsf{X}^{\textrm{sign}}$ and $\mathsf{X}_l$ correspondingly, then an achievable rate (via random coding) for the approximated channel is given by
\begin{equation}
\hat{R}\triangleq I(\hat{\mathsf{X}};\hat{\mathsf{X}}+\hat{\mathsf{Z}}),\nonumber
\end{equation}
where
\begin{equation}
\hat{\mathsf{X}}\triangleq \mathsf{X}^{\textrm{sign}}\sum\limits_{l=-L_1}^{L_2}2^l \mathsf{X}_l.\nonumber
\end{equation}
By adopting the same coding scheme over the original channel, one can achieve a rate given by
\begin{equation}
R\triangleq I(\hat{\mathsf{X}};\hat{\mathsf{X}}+\mathsf{Z}).\nonumber
\end{equation}
The following result provides a theoretical basis for expansion coding. (Here, $\overset{d.}{\to}$ denotes convergence in distribution.)
\begin{theorem}\label{thm:Channel_Expansion_Coding}
If $\hat{\mathsf{Z}}\overset{d.}{\to}\mathsf{Z}$ and $\hat{\mathsf{X}}\overset{d.}{\to}\mathsf{X}^*$, as $L_1,L_2\to\infty$, where $\mathsf{X}^*\sim f^*_{\mathsf{X}}(x)$, i.e., the optimal input distribution for the original channel, then $R\to C$.
\end{theorem}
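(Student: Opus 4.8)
The plan is to isolate the one quantity that actually changes as the expansion is refined, namely the output differential entropy, and show it converges to its capacity-achieving value. First I would use the independence of the noise from the expanded input to write both rates as a difference of differential entropies,
\begin{equation}
R = I(\hat{\mathsf{X}};\hat{\mathsf{X}}+\mathsf{Z}) = h(\hat{\mathsf{X}}+\mathsf{Z}) - h(\mathsf{Z}), \qquad C = I(\mathsf{X}^*;\mathsf{X}^*+\mathsf{Z}) = h(\mathsf{X}^*+\mathsf{Z}) - h(\mathsf{Z}),
\end{equation}
where the conditional term collapses to $h(\mathsf{Z})$ by shift-invariance of differential entropy (given $\hat{\mathsf{X}}=x$ the output is $x+\mathsf{Z}$). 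Since $h(\mathsf{Z})$ is common to both and finite, proving $R\to C$ reduces to establishing $h(\hat{\mathsf{X}}+\mathsf{Z})\to h(\mathsf{X}^*+\mathsf{Z})$, equivalently $I(\hat{\mathsf{X}};\hat{\mathsf{X}}+\mathsf{Z})\to I(\mathsf{X}^*;\mathsf{X}^*+\mathsf{Z})$.

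Next I would establish weak convergence of the channel outputs. Because $\mathsf{Z}$ is a fixed variable independent of the expanded input, the hypothesis $\hat{\mathsf{X}}\overset{d.}{\to}\mathsf{X}^*$ together with the L\'evy continuity theorem gives $\hat{\mathsf{X}}+\mathsf{Z}\overset{d.}{\to}\mathsf{X}^*+\mathsf{Z}$, since the characteristic functions factor as $\varphi_{\hat{\mathsf{X}}}\varphi_{\mathsf{Z}}\to\varphi_{\mathsf{X}^*}\varphi_{\mathsf{Z}}$; equivalently the output densities are the convolutions $f_{\hat{\mathsf{X}}+\mathsf{Z}} = P_{\hat{\mathsf{X}}}\ast f_{\mathsf{Z}}\to P_{\mathsf{X}^*}\ast f_{\mathsf{Z}} = f_{\mathsf{X}^*+\mathsf{Z}}$ pointwise, because convolution against the bounded continuous noise density $f_{\mathsf{Z}}$ carries weak convergence of $P_{\hat{\mathsf{X}}}$ to pointwise convergence of the smoothed densities.

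For the rate I would prefer a two-sided (sandwich) argument rather than a direct manipulation of $-f\log f$. The upper bound is immediate: provided $\hat{\mathsf{X}}$ respects the input constraint that defines the capacity, $R=I(\hat{\mathsf{X}};\hat{\mathsf{X}}+\mathsf{Z})\le C$, so $\limsup R\le C$. For the matching lower bound I would invoke the lower semicontinuity of mutual information under weak convergence: writing $I=D(P_{XY}\,\|\,P_X\otimes P_Y)$ and using the joint weak lower semicontinuity of relative entropy (via its Donsker--Varadhan variational form), the joint convergence $(\hat{\mathsf{X}},\hat{\mathsf{X}}+\mathsf{Z})\overset{d.}{\to}(\mathsf{X}^*,\mathsf{X}^*+\mathsf{Z})$ yields $\liminf R\ge I(\mathsf{X}^*;\mathsf{X}^*+\mathsf{Z})=C$. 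Combining the two gives $R\to C$; applying the identical argument to the fully expanded output, now using the second hypothesis $\hat{\mathsf{Z}}\overset{d.}{\to}\mathsf{Z}$ to conclude $\hat{\mathsf{X}}+\hat{\mathsf{Z}}\overset{d.}{\to}\mathsf{X}^*+\mathsf{Z}$, shows $\hat R\to C$ as well.

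I expect the main obstacle to sit in the upper bound, i.e.\ in ensuring that $\hat{\mathsf{X}}$ is actually feasible in the limit. Weak convergence $\hat{\mathsf{X}}\overset{d.}{\to}\mathsf{X}^*$ alone does not force the input moment to converge, $\mathbb{E}[\hat{\mathsf{X}}]\to\mathbb{E}[\mathsf{X}^*]$; this is a uniform-integrability (tail) statement, and if the moment overshoots then $I(\hat{\mathsf{X}};\hat{\mathsf{X}}+\mathsf{Z})$ can exceed $C$. The same tail control is exactly what a direct route would need as upper semicontinuity of the output entropy, since $-f\log f$ is neither bounded nor monotone and mass could in principle leak into the tails. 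The exponential structure of the AEN noise is what rescues this: the convolution $P_{\hat{\mathsf{X}}}\ast f_{\mathsf{Z}}$ inherits an exponential tail, and together with the bounded expansion range and a uniform moment bound on the family $\{\hat{\mathsf{X}}\}$ this supplies the uniform integrability needed to pass to the limit. Making this estimate precise is the crux; the remaining steps are routine.
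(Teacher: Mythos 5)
Your proposal is correct in substance, but it is considerably more careful than what the paper actually does: the paper's entire ``proof'' is a one-sentence appeal to ``the continuity property of mutual information,'' with no decomposition, no semicontinuity argument, and no discussion of feasibility. Your sandwich is the right way to make that sentence rigorous, precisely because mutual information is in general only \emph{lower} semicontinuous under weak convergence, so the naive ``continuity'' claim needs the one-sided bounds you supply: $\limsup R \le C$ from feasibility of $\hat{\mathsf{X}}$ in the capacity optimization, and $\liminf R \ge C$ from lower semicontinuity of $D(P_{XY}\,\|\,P_X\otimes P_Y)$ under joint weak convergence of $(\hat{\mathsf{X}},\hat{\mathsf{X}}+\mathsf{Z})\overset{d.}{\to}(\mathsf{X}^*,\mathsf{X}^*+\mathsf{Z})$ (which follows from your characteristic-function step plus continuity of $(x,z)\mapsto(x,x+z)$). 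One remark on the obstacle you flag as the crux: in the paper's actual construction the feasibility issue dissolves, because $\hat{\mathsf{X}}$ is a truncation of a nonnegative expansion whose full sum has mean exactly $E_{\mathsf{X}}$, so $\mathbb{E}[\hat{\mathsf{X}}]=\sum_{l=-L_1}^{L_2}2^l p_l\le\sum_{l\in\mathbb{Z}}2^l p_l= E_{\mathsf{X}}$ for every finite $L_1,L_2$; no uniform-integrability or exponential-tail estimate on the output is needed once you take the sandwich route rather than trying to prove convergence of $h(\hat{\mathsf{X}}+\mathsf{Z})$ directly. The only other caveat is that your identity $h(\hat{\mathsf{X}}+\mathsf{Z}\mid\hat{\mathsf{X}})=h(\mathsf{Z})$ and the finiteness of $h(\mathsf{Z})$ should be stated as hypotheses if the theorem is read for a general continuous noise density, but they hold trivially for the AEN case that the paper cares about. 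Net effect: your argument proves (a slightly more carefully stated version of) the theorem, and does so by filling in exactly the steps the paper omits.
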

The proof of this theorem follows from the continuity property of
mutual information. In other words, if the approximate channel is
close to the original one, and the distribution of the input is
close to the optimal input distribution, then the expansion coding
scheme will achieve the capacity of the channel under consideration.


\subsection{AEN Channel Coding Problem Setup}

In this section, we consider an example where expansion channel coding can
achieve the capacity of the target channel. The particular
channel considered is an additive exponential noise (AEN) channel, where the channel noise $\mathsf{Z}_i$ in \eqref{equ:General_Channel} is independently and identically distributed according to an exponential density with mean $E_{\mathsf{Z}}$, i.e., omitting the index $i$, noise has the following density:
\begin{equation}
f_{\mathsf{Z}}(z) = \frac{1}{E_{\mathsf{Z}}} e^{-\frac{z}{E_{\mathsf{Z}}}} u(z),\label{equ:AEN_Noise_Distribution}
\end{equation}
where $u(z)=1$ for $z\geq 0$ and $u(z)=0$ otherwise. Moreover, channel input $\mathsf{X}_i$ in \eqref{equ:General_Channel} is restricted to be non-negative and satisfies the mean constraint
\begin{equation}
\frac{1}{n}\sum_{i=1}^n\mathbb{E}[\mathsf{X}_i]\leq E_{\mathsf{X}}.\label{equ:AEN_Input_Constraint}
\end{equation}

The capacity of AEN channel is given by \cite{Verdu:Exponential96},
\begin{equation}
C_{\textrm{AEN}} = \log(1+\textrm{SNR}),\label{equ:AEN_Capacity}
\end{equation}
where $\textrm{SNR}\triangleq E_{\mathsf{X}}/E_{\mathsf{Z}}$, and the capacity
achieving input distribution is given by
\begin{equation}
f^*_{\mathsf{X}}(x) = \frac{E_{\mathsf{X}}}{(E_{\mathsf{X}}+E_{\mathsf{Z}})^2} e^{\frac{-x}{E_{\mathsf{X}}+E_{\mathsf{Z}}}}u(x)
+ \frac{E_{\mathsf{Z}}}{E_{\mathsf{X}}+E_{\mathsf{Z}}} \delta(x),\label{equ:AEN_Optimal_Input}
\end{equation}
where $\delta(x)=1$ if and only if $x=0$.
Here, the optimal input distribution is not exponentially distributed, but a mixture of an exponential distribution with a delta function.
However, we observe that in the high SNR regime, the optimal
distribution gets closer to an exponential distribution with mean
$E_{\mathsf{X}}$, since the weight of delta function approaches to $0$ as SNR tends to infinity.


\subsection{Binary Expansion of Exponential Distribution}

The basis of the proposed coding scheme is the expansion of analog random variables to discrete ones, and the exponential distribution emerges as a first candidate due to its decomposition property.
We show the following lemma, which allows us to have independent
Bernoulli random variables in the binary expansion of an exponential
random variable.
\begin{lemma}
\label{lem:Exponential_Expansion}
Let $\mathsf{B}_l$'s be independent Bernoulli random variables with parameters
given by $b_l$, i.e., $\textrm{Pr}\{\mathsf{B}_l=1\}\triangleq b_l$,
and consider the random variable defined by
\begin{equation}
\mathsf{B}\triangleq \sum\limits_{l=-\infty}^{\infty} 2^l \mathsf{B}_l.\nonumber
\end{equation}
Then, the random variable $\mathsf{B}$ is exponentially distributed
with mean $\lambda^{-1}$, i.e., its pdf is given by
\begin{equation}
f_{\mathsf{B}}(b) = \lambda e^{-\lambda b}, \quad b\geq 0,\nonumber
\end{equation}
if and only if the choice of $b_l$ is given by
\begin{equation}
b_l = \frac{1}{1+e^{\lambda 2^l}}. \nonumber
\end{equation}
\end{lemma}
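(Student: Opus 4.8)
The claim is an "if and only if" characterizing when the random sum $\mathsf{B}=\sum_{l=-\infty}^{\infty}2^l\mathsf{B}_l$ of independent Bernoulli levels is exponential. Let me think about what each direction requires and how to organize the argument.

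Let me work out the forward direction first. The natural tool here is the characteristic function (or moment generating function / Laplace transform), because the sum is over independent terms, so the transform factorizes into a product over levels. The Laplace transform of a target exponential with mean $\lambda^{-1}$ is the clean closed form $\lambda/(\lambda+s)$, so the whole problem reduces to verifying a product identity.
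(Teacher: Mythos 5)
Your plan identifies the right tool --- the transform method is exactly what the paper uses (it computes the moment generating function $M_{\mathsf{B}}(t)=\prod_{l}\mathbb{E}[e^{t2^l\mathsf{B}_l}]$ and shows it equals $\lambda/(\lambda-t)$) --- but what you have written is a plan, not a proof, and the two substantive pieces of work are both missing. First, you say the problem ``reduces to verifying a product identity'' but never verify it. That verification is the entire content of the forward direction: with $b_l=1/(1+e^{\lambda 2^l})$ each factor is $\mathbb{E}[e^{t2^l\mathsf{B}_l}]=(1+e^{(t-\lambda)2^l})/(1+e^{-\lambda 2^l})$, and to evaluate the doubly infinite product you need the telescoping identities
\begin{equation}
\prod_{l=0}^{n}\bigl(1+e^{\alpha 2^l}\bigr)=\frac{1-e^{2^{n+1}\alpha}}{1-e^{\alpha}},
\qquad
\prod_{l=-n}^{-1}\bigl(1+e^{\alpha 2^l}\bigr)=\frac{1-e^{\alpha}}{1-e^{\alpha 2^{-n}}},\nonumber
\end{equation}
applied separately to the non-negative and negative levels, followed by a limit computation (the negative-level piece contributes the factor $\lambda(1-e^{t-\lambda})/\bigl((\lambda-t)(1-e^{-\lambda})\bigr)$, which is where the pole at $t=\lambda$ comes from). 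Without carrying this out, nothing has been proved. One must also note the restriction $t<\lambda$ for convergence.

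Second, the lemma is an ``if and only if,'' and your proposal stops before even sketching the converse. The ``only if'' direction requires showing that if $\mathsf{B}$ is exponential with independent binary digits, then necessarily $b_l=1/(1+e^{\lambda 2^l})$; the paper does this by computing $\Pr\{\mathsf{B}_l=1\}=\sum_{k\geq 1}\Pr\{2^l(2k-1)\leq \mathsf{B}<2^l(2k)\}$ directly from the exponential cdf and summing the resulting geometric series (with the independence across levels following from the memoryless property). As it stands your submission establishes neither direction; the choice of method is sound and matches the paper's, but you need to supply the actual computations for both implications.
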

\begin{proof}
See Appendix~\ref{app:Exponential_Expansion_Proof}.
\end{proof}
This lemma reveals that one can reconstruct exponential random variable from a set of independent Bernoulli random variables perfectly. Fig.~\ref{fig:Exponential_Recovery} illustrates that the distribution of recovered random variable from expanded levels (obtained from the statistics of $100,000$ independent samples) is a good approximation of original exponential distribution.
\begin{figure}[t]
 \centering
 \includegraphics[scale=1]{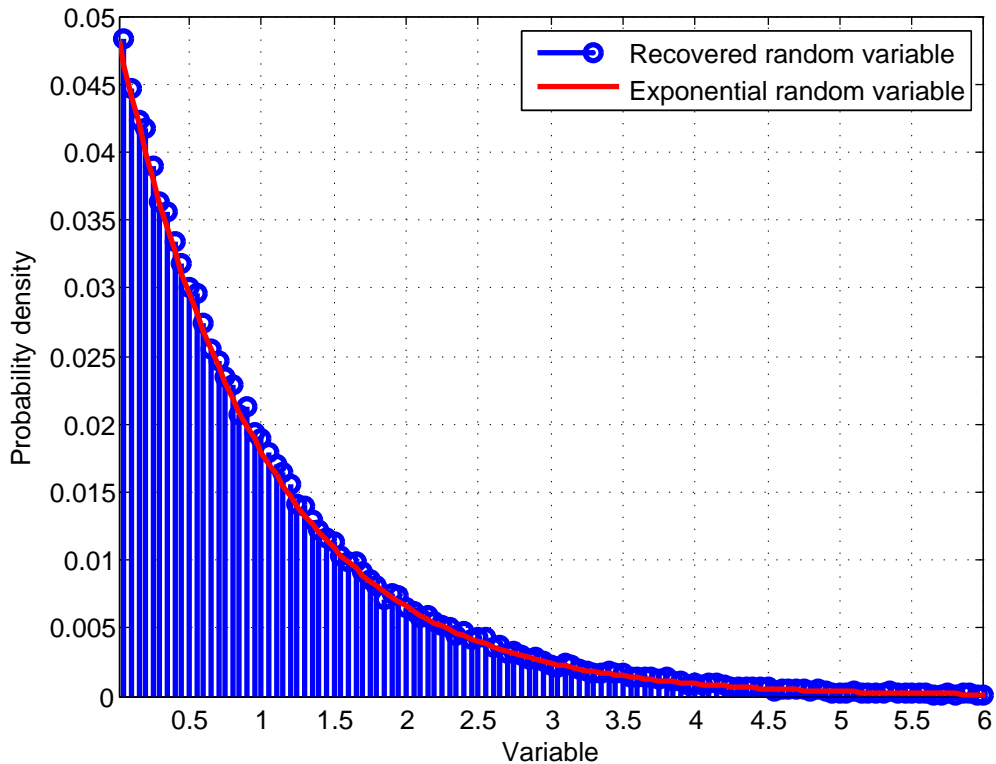}
 \caption{{\bf Distribution of recovered random variable from expanded levels, comparing with original exponential distribution ($\lambda=1$).} $100,000$ samples are generated from the expansion form of discrete random variables, where expansion levels are truncated from $-10$ to $10$.}
\label{fig:Exponential_Recovery}
\end{figure}

A set of typical numerical values of $b_l$s for $\lambda=1$ is shown in Fig.~\ref{fig:Exponential_Expansion_Parameter}. It is evident that $b_l$ approaches to $0$
for the ``higher'' levels and approaches $0.5$ for what we refer to as  ``lower'' levels. Hence, the primary non-trivial levels for which coding is meaningful are the so-called ``middle'' ones, which provides the basis for  truncating the number of levels to a finite value without a significant loss in performance.
\begin{figure}[t]
 \centering
 \includegraphics[scale=1]{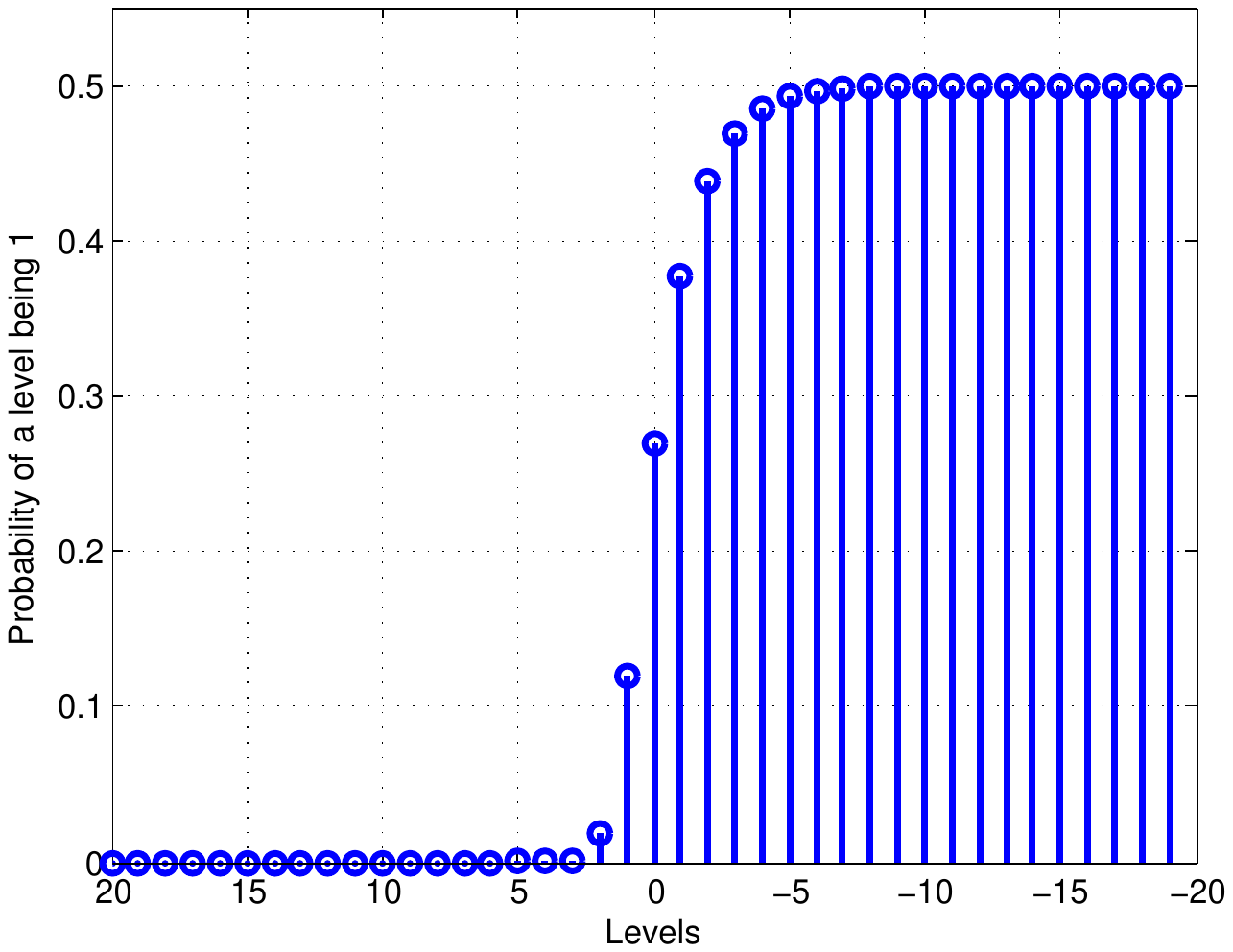}
\caption{{\bf Numerical results for a set of $b_l$ with $\lambda=1$.} X-axis is the level index for binary expansion (e.g., value $-2$ means the weight of corresponding level is $2^{-2}$), and Y-axis
shows the corresponding probability of taking value $1$ at each level, i.e., $b_l$.
}
\label{fig:Exponential_Expansion_Parameter}
\end{figure}


\subsection{Expansion Coding for AEN Channel}

We consider the binary expansion of the channel noise
\begin{align}
\hat{\mathsf{Z}}_i \triangleq \sum\limits_{l=-L_1}^{L_2} 2^{l} \mathsf{Z}_{i,l},\label{equ:AEN_Noise_Expansion}
\end{align}
where $\mathsf{Z}_{i,l}$ are i.i.d. Bernoulli random variables with parameters \begin{equation}
q_l\triangleq\textrm{Pr}\{\mathsf{Z}_{l}=1\}=\frac{1}{1+e^{ 2^l/E_{\mathsf{Z}}}},\quad l=-L_1,\ldots,L_2.\label{equ:AEN_Noise_Parameter}
\end{equation}
By Lemma \ref{lem:Exponential_Expansion}, $\hat{\mathsf{Z}}_i\overset{d.}{\to} \mathsf{Z}_i$ as $L_1, L_2\to \infty$. In this sense, we approximate the exponentially distributed noise perfectly by a set of discrete Bernoulli distributed noises. Similarly, we also expand channel input and output as in the following,
\begin{align}
&\hat{\mathsf{X}}_i \triangleq \sum\limits_{l=-L_1}^{L_2} 2^{l} \mathsf{X}_{i,l},\label{equ:AEN_Input_Expansion}\\
&\hat{\mathsf{Y}}_i \triangleq \sum\limits_{l=-L_1}^{L_2} 2^{l} \mathsf{Y}_{i,l},\label{equ:AEN_Output_Expansion}
\end{align}
where $\mathsf{X}_{i,l}$ and $\mathsf{Y}_{i,l}$ are also Bernoulli random variables with parameters $\textrm{Pr}\{\mathsf{X}_{l}=1\}\triangleq p_l$ and $\textrm{Pr}\{\mathsf{Y}_{l}=1\}\triangleq r_l$ correspondingly. Here, the channel input is chosen as zero for levels $l\notin\{-L_1, \cdots, L_2\}$.
Noting that the summation in the original channel is a sum over real numbers,
we do not have a binary symmetry channel (BSC) at each level (from $\mathsf{X}_l$s to $\mathsf{Y}_l$s).
If we could replace the real sum by modulo-$2$ sum such that at each level $l$ we have an independent coding problem, then any capacity achieving BSC code can be utilized over this channel. (Here, instead of directly using the capacity achieving
input distribution of each level, we can use its combination with the method of Gallager~\cite{Gallager:Information68} to
achieve a rate corresponding to the one obtained by the mutual
information $I(\mathsf{X}_l;\mathsf{Y}_l)$ evaluated with an input distribution
Bernoulli with parameter $p_l$. This helps to approximate the optimal input distribution of the original channel.)
However, due to the addition over real numbers, carries exist between
neighboring levels, which further implies that the levels are not independent.
Every level, except for the lowest one, is impacted by carry from lower levels. In order to alleviate this issue, two schemes are proposed in the following to ensure independent operation of the levels. In these models of coding over independent parallel channels, the total achievable rate is the summation of individual achievable rates over all levels.

\subsubsection{Considering carries as noise}

Denoting the carry seen at level $l$ as
$\mathsf{C}_{i,l}$, which is also a Bernoulli random variable with parameter $\textrm{Pr}\{\mathsf{C}_{i,l}=1\}\triangleq c_l$, the remaining channels can be represented
with the following,
\begin{equation}
\mathsf{Y}_{i,l} = \mathsf{X}_{i,l} \oplus \tilde{\mathsf{Z}}_{i,l}, \quad i=1,\cdots n,\nonumber
\end{equation}
where the effective noise, $\tilde{\mathsf{Z}}_{i,l}$,
is a Bernoulli random variable obtained
by the convolution of the actual noise and the carry, i.e.,
\begin{equation}
\tilde{q}_{l} \triangleq \textrm{Pr}\{ \tilde{\mathsf{Z}}_{i,l}=1\}
= q_{l} \otimes c_{l}\triangleq q_l(1-c_l)+c_l(1-q_l).\nonumber
\end{equation}
Here, the carry probability is given by the following recursion relationship:
\begin{itemize}
\item For level $l=-L_1$,
\begin{equation}
c_{-L_1} = 0;\nonumber
\end{equation}
\item For level $l>-L_1$,
\begin{align}
c_{l+1}=p_{l}q_{l}(1-c_{l})+p_{l}(1-q_{l})c_{l}+(1-p_{l})q_{l}c_l+p_lq_lc_l.\nonumber
\end{align}
\end{itemize}

Using capacity achieving codes for BSC, e.g., polar codes or spinal codes,
combined with the Gallager's method, expansion coding achieves the following rate by considering carries as noise.
\begin{theorem}
\label{thm:AEN_Expansion_Coding_SchemeI}
Expansion coding, considering carries as noise, achieves the rate for AEN channel given by
\begin{equation}
\hat{R}_1=\sum\limits_{l=-L_1}^{L_2}\hat{R}_{1,l}= \sum\limits_{l=-L_1}^{L_2} \left[H(p_l\otimes \tilde{q}_l) - H(\tilde{q}_l)\right],\label{equ:AEN_Achievable_Rate_SchemeI}
\end{equation}
for any $L_1,L_2>0$, where $p_l\in[0,0.5]$
is chosen to satisfy constraint \eqref{equ:AEN_Input_Constraint}, i.e.,
\begin{equation}
\frac{1}{n} \sum\limits_{i=1}^n \mathbb{E}[\hat{\mathsf{X}}_i ]=\frac{1}{n} \sum\limits_{i=1}^n\sum\limits_{l=-L_1}^{L_2}2^l \mathbb{E}[\mathsf{X}_{i,l} ]
 = \sum\limits_{l=-L_1}^{L_2} 2^{l} p_l
\leq E_{\mathsf{X}}.\nonumber
\end{equation}
\end{theorem}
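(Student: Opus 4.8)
The plan is to realize the expanded channel as a bank of independent parallel binary-symmetric channels (BSCs), one per level, and then invoke a standard random-coding (equivalently, capacity-achieving-code) argument on each level. First I would fix the per-level input distributions to be i.i.d. Bernoulli with parameter $p_l$, mutually independent across levels $l\in\{-L_1,\dots,L_2\}$ and across channel uses $i$. With this choice, the real addition $\hat{\mathsf{X}}_i+\hat{\mathsf{Z}}_i$ decomposes bit by bit into the modulo-$2$ relations $\mathsf{Y}_{i,l}=\mathsf{X}_{i,l}\oplus\mathsf{Z}_{i,l}\oplus\mathsf{C}_{i,l}$, where the carry $\mathsf{C}_{i,l}$ is produced by the majority rule from the bits at strictly lower levels. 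I would then verify by induction, with base case $c_{-L_1}=0$, that the carry recursion stated just before the theorem is exactly the law of $\mathsf{C}_{i,l}$ under this input/noise assumption, so that $\mathsf{C}_{i,l}\sim$ Bernoulli$(c_l)$.

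The next step is to establish the three properties that make each level a genuine BSC. Since $\mathsf{C}_{i,l}$ depends only on $\{\mathsf{X}_{i,l'},\mathsf{Z}_{i,l'}\}_{l'<l}$, it is independent of $\mathsf{X}_{i,l}$; since the noise bits are independent across levels by Lemma~\ref{lem:Exponential_Expansion}, $\mathsf{Z}_{i,l}$ is independent of $\mathsf{C}_{i,l}$, whence the effective noise $\tilde{\mathsf{Z}}_{i,l}=\mathsf{Z}_{i,l}\oplus\mathsf{C}_{i,l}$ is Bernoulli$(\tilde q_l)$ with $\tilde q_l=q_l\otimes c_l$; and because carries propagate only within a single real addition, $\tilde{\mathsf{Z}}_{i,l}$ is i.i.d. across $i$. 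Hence, marginally, level $l$ is a BSC with crossover $\tilde q_l$ driven by a Bernoulli$(p_l)$ input, whose output parameter is $p_l\otimes\tilde q_l$ and whose mutual information $I(\mathsf{X}_l;\mathsf{Y}_l)=H(\mathsf{Y}_l)-H(\mathsf{Y}_l\mid\mathsf{X}_l)$ equals exactly $H(p_l\otimes\tilde q_l)-H(\tilde q_l)$.

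I would then treat the levels as parallel channels encoded and decoded independently. The point of ``considering carries as noise'' is that each level is decoded from its own output sequence $\mathsf{Y}^n_{\cdot,l}$ alone, so the error analysis at level $l$ depends only on that level's marginal BSC law and is insensitive to how the carries correlate the levels. On each level I would use a capacity-achieving BSC code combined with Gallager's method to realize the non-uniform input Bernoulli$(p_l)$, achieving $\hat R_{1,l}=H(p_l\otimes\tilde q_l)-H(\tilde q_l)$ with vanishing per-level error probability; a union bound over the finitely many levels drives the overall error probability to zero and yields $\hat R_1=\sum_l\hat R_{1,l}$. Finally I would check the power constraint: $\frac1n\sum_i\mathbb{E}[\hat{\mathsf{X}}_i]=\sum_l 2^l p_l$, so any choice $p_l\in[0,0.5]$ with $\sum_l 2^l p_l\le E_{\mathsf{X}}$ is admissible, matching the statement.

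The main obstacle I anticipate is the rigorous decoupling of the levels: the carries genuinely correlate the effective noises across levels and also depend on the lower-level codewords, so one must argue that independent per-level decoding still recovers the full sum rate. The resolution is the structural observation above, namely that carries flow strictly upward within a single channel use, so the same-level input is independent of the effective noise and each per-level decoder only ever faces the correct marginal BSC; combined with an i.i.d. random-coding ensemble (or codes whose empirical input statistics match Bernoulli$(p_l)$), this guarantees that the carry recursion, and hence $\tilde q_l$, actually holds. A minor additional point is the carry-out of the top level $L_2$, which falls outside the coded range; since levels $-L_1,\dots,L_2$ receive carries only from below, this overflow does not affect their decoding and can be safely ignored.
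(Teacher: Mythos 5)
Your proposal is correct and follows essentially the same route as the paper, which states this theorem as a direct consequence of the construction described in the text: each level is treated as a BSC with effective crossover $\tilde q_l = q_l \otimes c_l$, coded independently with a capacity-achieving code plus Gallager's method, and summed over levels subject to $\sum_l 2^l p_l \le E_{\mathsf{X}}$. Your added care about why the carry is independent of the same-level input and why the effective noise is i.i.d.\ across channel uses (under the i.i.d.\ random-coding ensemble) fills in details the paper leaves implicit, but does not change the argument.
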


\subsubsection{Decoding carries}

In this scheme, let us consider decoding starting from the lowest level
$l=-L_1$. The receiver will obtain the correct $\mathsf{X}_{i,-L_1}$ for
$i=1,\cdots,n$ by using powerful discrete coding at this level. As the receiver has the knowledge of $\mathsf{Y}_{i,-L_1}$, it can determine
the correct noise sequence $\mathsf{Z}_{i,-L_1}$ for $i=1,\cdots,n$.
With this knowledge, the receiver can directly obtain each
$\mathsf{C}_{i,-L_1+1}$ for $i=1,\cdots,n$, which is the carry from
level $l=-L_1$ to level $l=-L_1+1$. This way, by iterating to higher levels, the receiver can recursively subtract the impact of
carry bits. Therefore, when there is no decoding error at each level, the effective channel that
the receiver observes is given by
\begin{equation}
\mathsf{Y}_{i,l} = \mathsf{X}_{i,l} \oplus \mathsf{Z}_{i,l}, \quad i=1, \cdots, n,\nonumber
\end{equation}
for $l=-L_1,\cdots,L_2$.
We remark that with this decoding strategy, the effective channels
will no longer be a set of independent parallel channels,
as decoding in one level affects the channels at higher levels.
However, if the utilized coding method is strong enough (e.g.,
if the error probability decays to $0$ exponentially with $n$),
then decoding error due to carry bits can be made insignificant
by increasing $n$ for a given number of levels.
We state the rate resulting from this approach in the following theorem.

\begin{theorem}
\label{thm:AEN_Expansion_Coding_SchemeII}
Expansion coding, by decoding the carries, achieves
the rate for AEN channel given by
\begin{equation}
\hat{R}_2=\sum\limits_{l=-L_1}^{L_2}\hat{R}_{2,l}=\sum\limits_{l=-L_1}^{L_2} \left[H(p_l\otimes q_l) - H(q_l)\right],\label{equ:AEN_Achievable_Rate_SchemeII}
\end{equation}
for any $L_1,L_2>0$, where $p_l\in[0,0.5]$ is chosen to
satisfy constraint \eqref{equ:AEN_Input_Constraint}, i.e.,
\begin{equation}
\frac{1}{n} \sum\limits_{i=1}^n \mathbb{E}[\hat{\mathsf{X}}_i ]=\frac{1}{n} \sum\limits_{i=1}^n\sum\limits_{l=-L_1}^{L_2}2^l \mathbb{E}[\mathsf{X}_{i,l} ]
 = \sum\limits_{l=-L_1}^{L_2} 2^{l} p_l
\leq E_{\mathsf{X}}.\nonumber
\end{equation}
\end{theorem}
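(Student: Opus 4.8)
The plan is to establish achievability by a successive (level-by-level, bottom-up) decoding argument that reduces each level to a genuine binary symmetric channel (BSC), and then to control the coupling introduced by carry reconstruction through a union bound over the finitely many levels. The rate expression in \eqref{equ:AEN_Achievable_Rate_SchemeII} involves the true noise parameter $q_l$ rather than the convolved parameter $\tilde q_l$, which signals that the proof must show that the carries are removed exactly rather than merely treated as additional noise.

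First I would start at the lowest level $l=-L_1$. Since $c_{-L_1}=0$ by construction, no carry enters this level, so the effective channel is exactly $\mathsf{Y}_{i,-L_1}=\mathsf{X}_{i,-L_1}\oplus \mathsf{Z}_{i,-L_1}$, a BSC with crossover probability $q_{-L_1}$. Using a capacity-achieving BSC code combined with Gallager's method to realize the non-uniform input distribution $\mathsf{X}_{-L_1}\sim\textrm{Bernoulli}(p_{-L_1})$, the receiver reliably recovers $\mathsf{X}_{i,-L_1}$, and hence $\mathsf{Z}_{i,-L_1}=\mathsf{Y}_{i,-L_1}\oplus\mathsf{X}_{i,-L_1}$, for all $i$. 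This anchors an induction: assuming every level below $l$ has been decoded correctly, the receiver knows the noise bits $\mathsf{Z}_{i,l'}$ for $l'<l$ and can compute, through the stated carry recursion, the exact carry sequence $\mathsf{C}_{i,l}$ entering level $l$; subtracting it leaves the clean channel $\mathsf{Y}_{i,l}=\mathsf{X}_{i,l}\oplus\mathsf{Z}_{i,l}$, again a BSC with crossover $q_l$.

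The per-level rate then follows from a direct mutual-information computation. With input $\mathsf{X}_l\sim\textrm{Bernoulli}(p_l)$ over a BSC with crossover $q_l$, the output satisfies $\mathsf{Y}_l\sim\textrm{Bernoulli}(p_l\otimes q_l)$ and $H(\mathsf{Y}_l\mid \mathsf{X}_l)=H(q_l)$, so $I(\mathsf{X}_l;\mathsf{Y}_l)=H(p_l\otimes q_l)-H(q_l)=\hat{R}_{2,l}$. Summing over $l=-L_1,\ldots,L_2$ yields $\hat{R}_2$. The input constraint $\sum_l 2^l p_l\le E_{\mathsf{X}}$ is obtained immediately from linearity of expectation applied to $\hat{\mathsf{X}}_i=\sum_l 2^l\mathsf{X}_{i,l}$, matching the condition in the theorem statement.

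The main obstacle, and the step I would treat most carefully, is that the levels are \emph{not} independent: an error at any lower level corrupts the reconstructed carry and therefore propagates to all higher levels, so the clean-BSC reduction above is only valid on the event that every lower level is decoded correctly. To close this gap I would keep the number of levels $L=L_1+L_2+1$ fixed (a constant independent of $n$) and invoke codes whose per-level error probability vanishes as $n\to\infty$, for instance polar or spinal codes whose error probability decays exponentially in $n$. A union bound over the $L$ levels then shows that the total probability of error is at most $L$ times a vanishing quantity, hence still vanishes for any fixed number of levels as $n\to\infty$. This is precisely what licenses the genie-aided ``all lower levels correct'' assumption to be dropped without loss, making the sum of the single-level BSC rates genuinely achievable.
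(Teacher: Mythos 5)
Your proposal is correct and follows essentially the same route as the paper: the paper's own justification (given in the text preceding the theorem rather than in an appendix) is exactly this bottom-up successive decoding with exact carry reconstruction from the recovered noise bits, the per-level BSC mutual information $H(p_l\otimes q_l)-H(q_l)$ realized via Gallager's method, and the observation that exponentially decaying per-level error probabilities render the carry-induced coupling harmless for a fixed number of levels. Your explicit union bound over the $L_1+L_2+1$ levels is just a slightly more formal statement of the paper's closing remark on error propagation.
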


Compared to the previous case, the optimization problem is simpler
here as the rate expression is simply the sum of the rates obtained
from a set of parallel channels.
Optimizing for these two theoretical achievable rates require choosing proper values for
$p_l$. Note that, the optimization problems given by Theorem~ \ref{thm:AEN_Expansion_Coding_SchemeI} and \ref{thm:AEN_Expansion_Coding_SchemeII} are not easy to solve in general. Here, instead of searching for the  optimal solutions directly, we utilize the information from the optimal input distribution of the original channel. Recall that the distribution in
\eqref{equ:AEN_Optimal_Input} can be approximated by an exponential distribution with
mean $E_{\mathsf{X}}$ at high SNR. Hence, one can simply choose $p_l$ from the binary expansion of the exponential
distribution with mean $E_{\mathsf{X}}$ as an achievable scheme, i.e.,
\begin{equation}
p_l\triangleq\textrm{Pr}\{\mathsf{X}_{l}=1\}=\frac{1}{1+e^{2^l/E_{\mathsf{X}}}},\quad l=-L_1,\ldots,L_2.\label{equ:AEN_Input_Parameter}
\end{equation}

We now show that this proposed scheme achieves the
capacity of AEN channel in the high SNR regime for a sufficiently
high number of levels. For this purpose, we first characterize the asymptotic behavior of entropy at each level for $q_l$ and $\tilde{q}_l$ correspondingly, where the later one is closely related to carries.
\begin{lemma}
\label{lem:AEN_Entropy_Bound}
The entropy of noise seen at level $l$, $H(q_l)$, is bounded by
\begin{align}
&H(q_l)< 2^{-l+\eta}3\log e\quad\text{for }\;l> \eta,\label{equ:AEN_Entropy_Bound1}\\
&H(q_l)> 1-2^{l-\eta}\log e\quad\text{for }\;l\leq \eta,\label{equ:AEN_Entropy_Bound2}
\end{align}
where $\eta\triangleq \log E_{\mathsf{Z}}$.
\end{lemma}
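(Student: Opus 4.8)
The plan is to pass to the natural variable $x \triangleq 2^{l}/E_{\mathsf{Z}} = 2^{\,l-\eta}$, so that $q_l = 1/(1+e^{x})$, the hypothesis $l>\eta$ (resp.\ $l\le\eta$) becomes $x>1$ (resp.\ $x\le 1$), and the two target bounds read $H(q_l) < 3(\log e)/x$ and $H(q_l) > 1 - x\log e$. The first step is to rewrite the binary entropy in a form adapted to the logistic shape of $q_l$. Since $q_l/(1-q_l) = e^{-x}$, one has $\log\frac{q_l}{1-q_l} = -x\log e$ and $1-q_l = 1/(1+e^{-x})$, so that
\begin{equation}
H(q_l) = -q_l\log\tfrac{q_l}{1-q_l} - \log(1-q_l) = q_l\,x\log e + \log(1+e^{-x}). \nonumber
\end{equation}
Both terms are nonnegative, which is what makes the two regimes split cleanly.

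For the high-level regime $x>1$ I would upper bound each term crudely: $q_l = 1/(1+e^{x}) < e^{-x}$ gives $q_l\,x\log e < x e^{-x}\log e$, and $\ln(1+u)\le u$ gives $\log(1+e^{-x}) = (\log e)\ln(1+e^{-x}) \le e^{-x}\log e$. Hence $H(q_l) \le (x+1)e^{-x}\log e$, and it remains only to check the elementary inequality $(x+1)e^{-x} < 3/x$, i.e.\ $x(x+1) < 3e^{x}$. This follows for every $x\ge 0$ from $e^{x}\ge 1+x+x^{2}/2$, since $3(1+x+x^{2}/2) - (x^{2}+x) = 3 + 2x + \tfrac12 x^{2} > 0$; in particular it holds on $x>1$, giving \eqref{equ:AEN_Entropy_Bound1}.

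For the low-level regime $x\le 1$ I would simply discard the nonnegative term $q_l\,x\log e$ and bound the remaining logarithm from below by AM--GM: $1+e^{-x}\ge 2\sqrt{e^{-x}} = 2e^{-x/2}$, so $\log(1+e^{-x}) \ge 1 - \tfrac{x}{2}\log e$. Therefore $H(q_l) \ge 1 - \tfrac{x}{2}\log e > 1 - x\log e$ for every $x>0$, which is \eqref{equ:AEN_Entropy_Bound2}. (Notably this lower bound needs no restriction on $x$; the hypothesis $l\le\eta$ only guarantees that the stated right-hand side is nonnegative, hence informative.)

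The only genuinely analytic content is the entropy identity in the first step together with the scalar inequality $x(x+1)<3e^{x}$; both are elementary, so I do not anticipate a real obstacle. The point that needs care is bookkeeping of the logarithm base: $H$ and the target bounds are in bits, so the natural-log estimates ($\ln(1+u)\le u$, the exponential series) must be converted through $\log(\cdot) = (\log e)\ln(\cdot)$, and this is precisely where the factor $\log e$ in both bounds originates. One should also confirm the translation $E_{\mathsf{Z}} = 2^{\eta}$ with $\eta = \log E_{\mathsf{Z}}$ taken in base $2$, so that the thresholds $l \gtrless \eta$ correspond exactly to $x \gtrless 1$.
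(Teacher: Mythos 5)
Your proof is correct and follows essentially the same route as the paper's: both rewrite the binary entropy via the identity $H(q_l)=\log(1+e^{x})-\tfrac{e^{x}}{1+e^{x}}x\log e$ (algebraically identical to your $q_l\,x\log e+\log(1+e^{-x})$) and then apply the elementary estimates $\ln(1+u)\le u$ and $e^{x}\ge 1+x+x^{2}/2$ in the two regimes. Your version is a bit tidier (the AM--GM step for the lower bound and the single scalar inequality $x(x+1)<3e^{x}$ replace the paper's longer chain), but the substance is the same and the argument is sound.
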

\begin{proof}
See Appendix~\ref{app:AEN_Entropy_Bound}.
\end{proof}

\begin{lemma}
\label{lem:AEN_Equivalent_Entropy_Bound}
The entropy of equivalent noise at level $l$, $H(\tilde{q}_l)$,
is bounded by
\begin{align}
&H(\tilde{q}_l)< 6(l-\eta)2^{-l+\eta} \log e\quad\text{for }\;l> \eta,\label{equ:AEN_Equivalent_Entropy_Bound1}\\
&H(\tilde{q}_l)> 1-2^{l-\eta}\log e\quad\text{for }\;l\leq \eta,\label{equ:AEN_Equivalent_Entropy_Bound2}
\end{align}
where $\eta\triangleq\log E_{\mathsf{Z}}$.
\end{lemma}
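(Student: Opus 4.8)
The plan is to control $H(\tilde q_l)$ by comparing the effective noise $\tilde{\mathsf Z}_l=\mathsf Z_l\oplus\mathsf C_l$ with the pure noise $\mathsf Z_l$ and the carry $\mathsf C_l$, which are independent, using two elementary entropy inequalities. For the \emph{lower} bound, conditioning reduces entropy: $H(\tilde q_l)=H(\mathsf Z_l\oplus\mathsf C_l)\ge H(\mathsf Z_l\oplus\mathsf C_l\mid\mathsf C_l)=H(\mathsf Z_l\mid\mathsf C_l)=H(q_l)$, where the last two equalities use that $\mathsf Z_l\oplus\mathsf C_l\leftrightarrow\mathsf Z_l$ bijectively given $\mathsf C_l$, and that $\mathsf Z_l,\mathsf C_l$ are independent. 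Hence for $l\le\eta$ the claimed bound $H(\tilde q_l)>1-2^{l-\eta}\log e$ is immediate from \eqref{equ:AEN_Entropy_Bound2} of Lemma~\ref{lem:AEN_Entropy_Bound}, with no further work. For the \emph{upper} bound, since $\tilde{\mathsf Z}_l$ is a function of the independent pair $(\mathsf Z_l,\mathsf C_l)$, subadditivity gives $H(\tilde q_l)\le H(\mathsf Z_l)+H(\mathsf C_l)=H(q_l)+H(c_l)$. Combined with $H(q_l)<3\cdot2^{-l+\eta}\log e$ from \eqref{equ:AEN_Entropy_Bound1}, the whole problem reduces to showing $H(c_l)\lesssim(l-\eta)2^{-l+\eta}\log e$ for $l>\eta$.

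The core step is therefore a tail bound on the carry probability $c_l$. First I would record $p_l,q_l\le 1/2$ (immediate from the logistic forms) and, by induction on the recursion with base case $c_{-L_1}=0$, the fact $c_l\le 1/2$, using that the carry-out is the majority of three independent bits each with success probability at most $1/2$ and that majority is monotone. Expanding the recursion gives $c_{l+1}=p_lq_l+(p_l+q_l)c_l-2p_lq_lc_l\le p_lq_l+(p_l+q_l)c_l$, whose unrolling is the generate--propagate union bound $c_l\le\sum_{k<l}p_kq_k\prod_{j=k+1}^{l-1}(p_j+q_j)$. The key observation is that for $l>\eta$ the propagation factor $p_j+q_j$ approaches $1/2$ extremely fast: $p_j\le 1/2$ while $q_j=1/(1+e^{2^{j-\eta}})$ is doubly-exponentially small once $j$ exceeds $\eta$, so $p_j+q_j=\tfrac12+O(q_j)$ there. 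Consequently $\prod_j(p_j+q_j)$ decays geometrically with ratio converging to $1/2$, the finitely many transition levels near $\eta$ (where $p_j+q_j$ may exceed $1/2$) contributing only a bounded multiplicative constant. This yields $c_l\le K\,2^{-(l-\eta)}$ for $l>\eta$ with an absolute constant $K$.

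Finally I would convert this into an entropy estimate via $H(c)\le c\log(e/c)$ for $c\in[0,1/2]$. With $c_l\le K2^{-(l-\eta)}$ we have $-\log c_l\ge(l-\eta)-\log K$, so $H(c_l)\le c_l\log(e/c_l)\le K2^{-(l-\eta)}\big((l-\eta)+\log(eK)\big)\log e$, which is of the promised order $(l-\eta)2^{-l+\eta}\log e$; the linear factor $(l-\eta)$ is precisely $-\log c_l$. Adding the $3\cdot2^{-l+\eta}\log e$ contribution of $H(q_l)$ and absorbing the constants into the coefficient $6$ yields \eqref{equ:AEN_Equivalent_Entropy_Bound1}. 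I would also note that the bound is only informative once $6(l-\eta)2^{-l+\eta}\log e<1$; for $l$ within $O(1)$ of $\eta$ it holds trivially because $H(\tilde q_l)\le 1$, which conveniently removes any need to track constants in the transition region.

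The main obstacle is the carry tail bound of the second paragraph, namely making the ``ratio $\to1/2$'' argument quantitative. One must verify that the product of propagation factors over the transition levels near $\eta$ is bounded by an absolute constant, uniformly in $E_{\mathsf Z}$ and in the input parameters $p_l$ (hence uniformly in SNR), so that $K$ --- and ultimately the coefficient $6$ --- is genuinely universal rather than dependent on the operating point.
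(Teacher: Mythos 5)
Your architecture (lower bound by conditioning, upper bound by $H(\tilde q_l)\le H(q_l)+H(c_l)$ plus a tail bound on the carry) is sound, and the lower-bound half is correct and equivalent to the paper's argument ($\tilde q_l=q_l+c_l(1-2q_l)\ge q_l$ with both probabilities in $[0,\tfrac12]$, then Lemma~\ref{lem:AEN_Entropy_Bound}). The gap is in the carry tail bound, and it is exactly where you suspected trouble --- but the failure is not confined to a bounded ``transition region'' near $\eta$. After you drop the $-2p_lq_lc_l$ term, your propagation factor is $p_j+q_j$, and this exceeds $\tfrac12$ on \emph{every} level $j\le\eta$: as $j\to-\infty$ both $p_j$ and $q_j$ tend to $\tfrac12$, so $p_j+q_j\to 1$. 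Consequently the product $\prod_{j=k+1}^{l-1}(p_j+q_j)$ contributes essentially no decay across the levels below $\eta$, and your unrolled sum $\sum_{k<l}p_kq_k\prod_j(p_j+q_j)$ contains on the order of $L_1+\eta$ terms with $k\le\eta$, each of size roughly $\tfrac14\cdot 2^{-(l-1-\eta)}$. The union bound therefore yields only $c_l\lesssim (L_1+\eta)\,2^{-(l-\eta)}$, with a ``constant'' that grows with the number of levels; since Theorem~\ref{thm:AEN_Achivable_Rate_Main_Result} requires $L_1\ge-\log\epsilon-\log E_{\mathsf Z}$, this is not uniform in the operating point and the coefficient $6$ cannot be recovered.

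The fix is to not discard the $-2p_jq_jc_l$ term: the exact propagation factor is $p_j\otimes q_j=p_j+q_j-2p_jq_j\le\tfrac12$ whenever $p_j,q_j\le\tfrac12$, which restores genuine geometric decay through the low levels and gives $c_{l+1}\le p_lq_l+\tfrac12 c_l$, hence $c_l\le\sum_{k<l}p_kq_k2^{-(l-1-k)}\le K2^{-(l-\eta)}$ with an absolute $K$ (the $k\le\eta$ part sums to at most $2^{-(l-\eta)}$, and the $k>\eta$ part is negligible because $q_k$ is doubly-exponentially small there). This is precisely what the paper's induction does in Appendix~\ref{app:AEN_Equivalent_Entropy_Bound}: from $c_{l+1}=p_l(c_l+q_l-2q_lc_l)+q_lc_l\le\tfrac12(c_l+q_l)$ (using only $p_l\le\tfrac12$) it proves $c_l<2^{\eta-l+1}-2/(1+e^{2^{l-\eta}})$, then bounds $\tilde q_l<2^{\eta-l+1}$ and applies $H(\cdot)$ directly to that probability rather than splitting off $H(c_l)$. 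Once you have $c_l\le K2^{-(l-\eta)}$ with absolute $K$, your final step $H(c_l)\le c_l\log(e/c_l)$ is fine, though you should still verify that $3\cdot2^{-(l-\eta)}\log e + H(c_l)$ fits under the stated coefficient $6(l-\eta)2^{-(l-\eta)}\log e$ for all $l>\eta$, using $H(\tilde q_l)\le 1$ to dispose of the first level or two where the bound would otherwise be tight.
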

\begin{proof}
See Appendix~\ref{app:AEN_Equivalent_Entropy_Bound}.
\end{proof}

The intuitions behind these lemmas are given by the example scenario in Fig.~\ref{fig:AEN_Expanded_Level}, which shows that the bounds on noise tails are both exponential. Now, we state the main result indicating the capacity gap of expansion coding scheme over AEN channel.
\begin{theorem}
\label{thm:AEN_Achivable_Rate_Main_Result}
For any positive constant $\epsilon<1$, if
\begin{itemize}
\item $L_1\geq -\log \epsilon-\log E_{\mathsf{Z}}$;
\item $L_2\geq -\log \epsilon+\log E_{\mathsf{X}}$;
\item \emph{$\textrm{SNR}\geq 1/\epsilon$}, where \emph{$\textrm{SNR}=E_{\mathsf{X}}/E_{\mathsf{Z}}$},
\end{itemize}
then, with the choice of $p_l$ as \eqref{equ:AEN_Input_Parameter},
\begin{enumerate}
\item considering carries as noise, the achievable rate given by \eqref{equ:AEN_Achievable_Rate_SchemeI} satisfies
\emph{
\begin{equation}
\hat{R}_{1}\geq C_{\textrm{AEN}}-c,\nonumber
\end{equation}}
where $c$ is a constant independent of \emph{$\textrm{SNR}$} and $\epsilon$;
\item decoding carries, the achievable rate given by \eqref{equ:AEN_Achievable_Rate_SchemeII} satisfies
\emph{
\begin{equation}
\hat{R}_{2}\geq C_{\textrm{AEN}}-5\epsilon\log e .\nonumber
\end{equation}}
\end{enumerate}
\end{theorem}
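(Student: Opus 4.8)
\emph{Proposal.} The plan is to lower-bound each achievable rate level-by-level, reduce both sums to differences of binary entropies of the input and noise bit-parameters, evaluate the dominant term exactly, and control the truncation and carry corrections by geometric tail sums. In both schemes the per-level rate has the form $H(p_l\otimes s_l)-H(s_l)$ with $s_l=q_l$ (decoding carries) or $s_l=\tilde q_l$ (carries as noise). I would use the elementary fact that for $a,b\in[0,\tfrac12]$ one has $a\le a\otimes b\le\tfrac12$, hence $H(a\otimes b)\ge H(a)$, to bound $H(p_l\otimes s_l)-H(s_l)\ge H(p_l)-H(s_l)$. This requires $s_l\le\tfrac12$: for \eqref{equ:AEN_Achievable_Rate_SchemeII} this holds directly since $q_l\le\tfrac12$ by \eqref{equ:AEN_Noise_Parameter}; for \eqref{equ:AEN_Achievable_Rate_SchemeI} I would first show $c_l\le\tfrac12$ for all $l$ by induction on the carry recursion (the base $c_{-L_1}=0$, and at $c_l=\tfrac12$ the recursion collapses to $c_{l+1}=\tfrac12(p_l+q_l)\le\tfrac12$, with $c_{l+1}$ monotone in $c_l$), whence $\tilde q_l=q_l\otimes c_l\le\tfrac12$. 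It then suffices to lower-bound $\sum_l[H(p_l)-H(q_l)]$ and, for scheme I, to additionally control $\sum_l[H(q_l)-H(\tilde q_l)]$.

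The key step is the exact identity
\begin{equation}
\sum_{l=-\infty}^{\infty}\bigl[H(p_l)-H(q_l)\bigr]=\log(\textrm{SNR}).\nonumber
\end{equation}
I would prove it by observing that truncating the expansion at level $K$ (keeping $l\ge K$) realizes $\lfloor\mathsf{X}/2^K\rfloor$, which for an exponential input is geometric with parameter $\rho_X=e^{-2^K/E_{\mathsf{X}}}$; by independence of the bits (Lemma~\ref{lem:Exponential_Expansion}), $\sum_{l\ge K}H(p_l)=H(\rho_X)/(1-\rho_X)$, and similarly for $q_l$. Using $H(\rho)/(1-\rho)=-\log(1-\rho)+\log e+o(1)$ as $\rho\to1$ together with $1-\rho\sim 2^K/E$, the difference tends to $\log(E_{\mathsf{X}}/E_{\mathsf{Z}})=\log(\textrm{SNR})$ as $K\to-\infty$. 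I would then write the truncated sum as the full sum minus two tails and bound them via Lemma~\ref{lem:AEN_Entropy_Bound}: for $l<-L_1$, $H(p_l)-H(q_l)\le 2^{l-\eta}\log e$, summing to $2^{-(L_1+\eta)}\log e\le\epsilon\log e$ by $L_1\ge-\log\epsilon-\eta$; for $l>L_2$, $H(p_l)-H(q_l)\le H(p_l)\le 3\cdot 2^{\log E_{\mathsf{X}}-l}\log e$, summing to $3\cdot2^{\log E_{\mathsf{X}}-L_2}\log e\le 3\epsilon\log e$ by $L_2\ge-\log\epsilon+\log E_{\mathsf{X}}$. This yields $\hat R_2\ge\log(\textrm{SNR})-4\epsilon\log e$. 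Finally $\log(\textrm{SNR})=C_{\textrm{AEN}}-\log(1+\textrm{SNR}^{-1})\ge C_{\textrm{AEN}}-\epsilon\log e$ using $\textrm{SNR}\ge1/\epsilon$, giving $\hat R_2\ge C_{\textrm{AEN}}-5\epsilon\log e$.

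For scheme I, I would write $\sum_l[H(p_l)-H(\tilde q_l)]=\sum_l[H(p_l)-H(q_l)]+\sum_l[H(q_l)-H(\tilde q_l)]$, handle the first sum exactly as above, and upper-bound the carry-induced excess $\sum_l[H(\tilde q_l)-H(q_l)]$ by an absolute constant using Lemmas~\ref{lem:AEN_Entropy_Bound} and \ref{lem:AEN_Equivalent_Entropy_Bound}: for $l\le\eta$ both entropies lie in $(1-2^{l-\eta}\log e,\,1]$, so their difference is at most $2^{l-\eta}\log e$ and sums to $\le2\log e$; for $l>\eta$, $H(\tilde q_l)-H(q_l)\le H(\tilde q_l)\le 6(l-\eta)2^{-(l-\eta)}\log e$, which sums to a finite constant since $\sum_{k\ge1}k2^{-k}=2$. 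Combining with $\hat R_1\ge\log(\textrm{SNR})-4\epsilon\log e$ minus this constant, and using $\epsilon<1$, gives $\hat R_1\ge C_{\textrm{AEN}}-c$ with $c$ independent of $\textrm{SNR}$ and $\epsilon$.

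The main obstacle is precisely this scheme-I carry term: unlike the noise bits, the $\tilde q_l$ do not expand a clean exponential, so the tidy geometric identity is unavailable and one must instead rely on the slower, polynomial-times-geometric tail of Lemma~\ref{lem:AEN_Equivalent_Entropy_Bound}; this is why scheme I only attains a constant gap while scheme II attains the sharp $5\epsilon\log e$. A secondary technical point deserving care is the sign condition $s_l\le\tfrac12$ (i.e., establishing $c_l\le\tfrac12$ along the recursion), on which the entire reduction $H(p_l\otimes s_l)\ge H(p_l)$ depends.
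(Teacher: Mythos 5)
Your proposal is correct and shares the overall skeleton of the paper's proof --- the reduction $H(p_l\otimes s_l)-H(s_l)\ge H(p_l)-H(s_l)$, the use of Lemma~\ref{lem:AEN_Entropy_Bound} for tail control, the bound $\sum_l[H(\tilde q_l)-H(q_l)]\le 2\log e+12\log e$ via Lemmas~\ref{lem:AEN_Entropy_Bound} and~\ref{lem:AEN_Equivalent_Entropy_Bound}, and the final accounting $\log(\textrm{SNR})\ge C_{\textrm{AEN}}-\epsilon\log e$ --- but it extracts the main term by a genuinely different argument. The paper observes $p_l=q_{l+\eta-\xi}$ with $\eta=\log E_{\mathsf{Z}}$, $\xi=\log E_{\mathsf{X}}$, reindexes the truncated sum so that it telescopes to $\xi-\eta$ boundary terms, and bounds those boundary sums with Lemma~\ref{lem:AEN_Entropy_Bound}; this reindexing implicitly treats $\xi-\eta=\log(\textrm{SNR})$ as an integer shift of the level index. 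You instead evaluate the doubly-infinite sum exactly, $\sum_{l\in\mathbb{Z}}[H(p_l)-H(q_l)]=\log(\textrm{SNR})$, by identifying $\sum_{l\ge K}H(p_l)$ with the entropy of the geometric variable $\lfloor\mathsf{X}/2^K\rfloor$ (legitimate by the independence in Lemma~\ref{lem:Exponential_Expansion}) and taking $K\to-\infty$, then subtract the two truncation tails at $l<-L_1$ and $l>L_2$. Your route buys a cleaner conceptual identity that does not require $\log(\textrm{SNR})$ to align with the integer level grid, at the cost of having to justify the limit/asymptotics of $H(\rho)/(1-\rho)$ as $\rho\to1$; the paper's route is more elementary but its telescoping step is only literally valid for integer $\xi-\eta$. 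A further small improvement on your side: you prove $c_l\le\tfrac12$ (hence $\tilde q_l\le\tfrac12$) by induction on the carry recursion, a monotonicity fact the paper's proof of Lemma~\ref{lem:AEN_Equivalent_Entropy_Bound} asserts without argument, and which is indeed needed for the entropy-monotonicity steps in both proofs. Your constants ($4\epsilon\log e+\epsilon\log e$ for scheme II, an extra $14\log e$ for scheme I) match the paper's.
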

\begin{proof}
The proof of this theorem is based on the observation that the sequence of $p_l$ is a left-shifted version of $q_l$ at high SNR regime.
As limited by power constraint, the number of levels shifted is at most $\log(1+\textrm{SNR})$, which further implies the rate we gain
is roughly $\log(1+\textrm{SNR})$ as well, when carries are decoded. If considering carries as noise, then there is apparent gap between the two version of noises, which leads to a constant gap for achievable rate.
Fig.~\ref{fig:AEN_Expanded_Level} helps to illustrate key steps of the intuition, and a detailed proof with precise
calculations is given in Appendix~\ref{app:AEN_Achivable_Rate_Main_Result}.
\end{proof}

\begin{figure}[t!]
 \centering
 \includegraphics[scale=1]{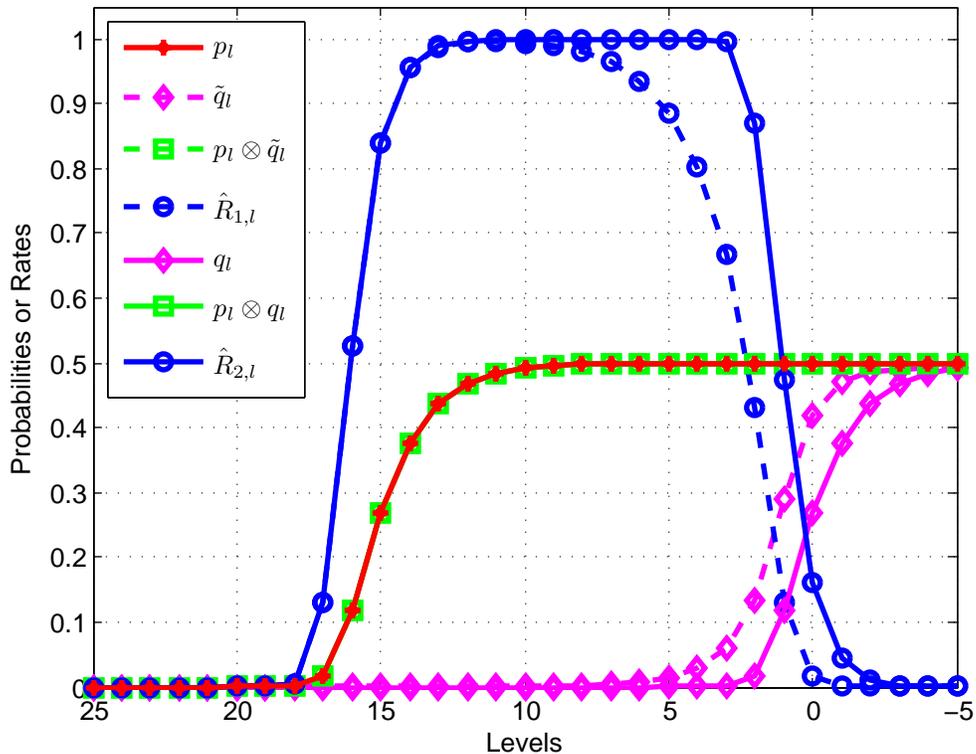}
 \caption{{\bf Signal and noise probabilities, and rates per level.}
$p_l$, $q_l$, $p_l\otimes q_l$, $\tilde{q}_l$, $p_l\otimes \tilde{q}_l$ and rates at each level are shown. In this example,
$E_{\mathsf{X}}=2^{15}$ and $E_{\mathsf{Z}}=2^0$, which further implies $p_l$ is a left-shifted version of $q_l$ by $15$ levels. The coding scheme with $L_1=5$ and $L_2=20$ covers the significant portion of the rate obtained by
using all of the parallel channels. }
\label{fig:AEN_Expanded_Level}
\end{figure}

By Lemma~\ref{lem:Exponential_Expansion},
$\hat{\mathsf{Z}}\overset{d.}{\to} \mathsf{Z}$, and combined with the
argument in Theorem~\ref{thm:Channel_Expansion_Coding}, we have
$\hat{R}_2\to R$ as $L_1,L_2\to\infty$. Hence, the coding scheme also
works well for the original AEN channel. More precisely, expansion coding scheme achieves the capacity of AEN channel at high SNR region using moderately large number of expansion levels.

\subsection{Numerical results}

We calculate the rates obtained from the two schemes above
($\hat{R}_1$ as \eqref{equ:AEN_Achievable_Rate_SchemeI} and $\hat{R}_2$ as \eqref{equ:AEN_Achievable_Rate_SchemeII})
with input probability distribution given by (\ref{equ:AEN_Input_Parameter}).

Numerical results are given in Fig.~\ref{fig:AEN_Achievable_Rate}.
It is evident from the figure (and also from the analysis
given in Theorem~\ref{thm:AEN_Achivable_Rate_Main_Result}) that the proposed technique of decoding carries,
when implemented with sufficiently large number of levels,
achieves channel capacity at high SNR regime.

Another point is that neither of the two schemes works well in low SNR regime, which mainly results from the fact that
input approximation is only perfect for sufficiently high
SNR. Nevertheless, the scheme (the rate obtained by decoding carries) performs close to optimal in the moderate SNR regime as well.

\begin{figure}[t]
 \centering
 \includegraphics[scale=1]{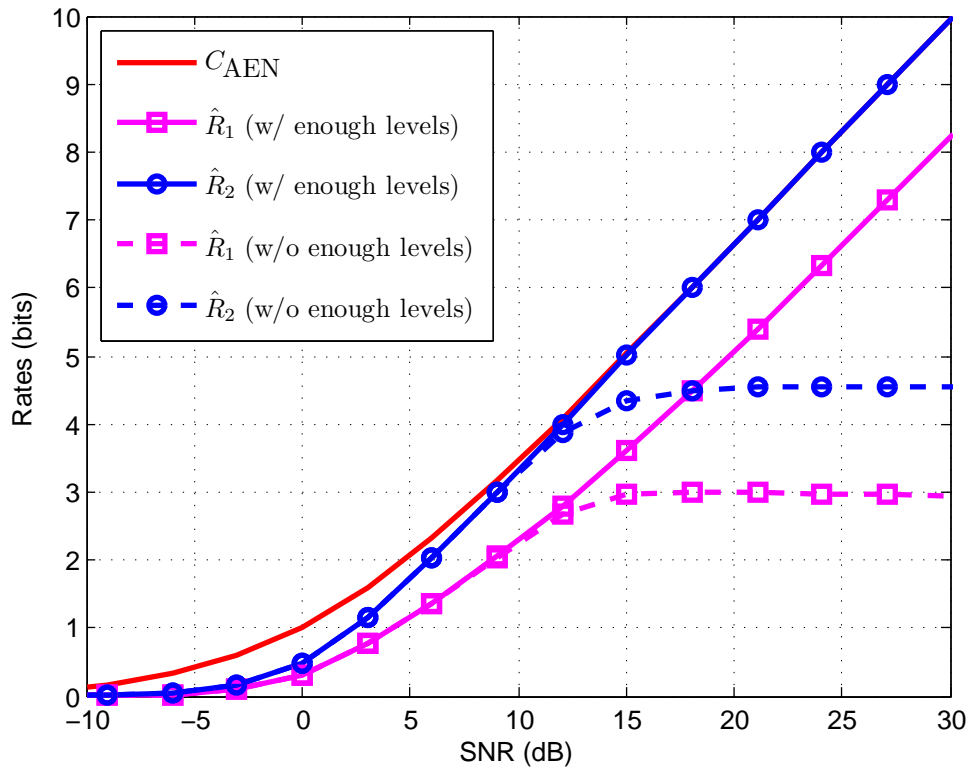}
 \caption{{\bf Numerical results of achievable rates for AEN channels using expansion coding.}
$\hat{R}_1$: The rate obtained by considering carries as noise.
$\hat{R}_2$: The rate obtained by decoding carry at each level.
Solid lines represent adopting enough number of levels as indicated in Theorem~\ref{thm:AEN_Achivable_Rate_Main_Result}, while dashed lines represent only adopting constant number of levels (not scaling with SNR).
}
\label{fig:AEN_Achievable_Rate}
\end{figure}

\subsection{Generalization}
In the previous section, only binary expansion was considered. Generalization to $q$-ary expansion with $q\geq 2$ is discussed here. Note that this change does not impact the expansion coding framework, and the only difference lies in that each level after expansion should be modeled as a $q$-ary discrete memoryless channel. For this, we need to characterize the $q$-ary expansion of exponential distribution. Mathematically, the parameters of expanded levels for an exponential random variable $\mathsf{B}$ with parameter $\lambda$ can be calculated as follows:
\begin{align}
b_{l,s} &\triangleq \textrm{Pr}\{\mathsf{B}_l=s\}\nonumber\\
        &=\sum_{k=0}^{\infty}\textrm{Pr}\{q^l(qk+s)\leq\mathsf{B}<q^l(qk+s+1)\}\nonumber\\
        &=\sum_{k=0}^{\infty}\left[e^{-\lambda q^l(qk+s)}-e^{-\lambda q^l(qk+s+1)}\right]\nonumber\\
        &=\frac{\left(1-e^{-\lambda q^l}\right)e^{-\lambda q^l s}}{1-e^{-\lambda q^{l+1}}},\nonumber
\end{align}
where $l\in\{-L_1,\ldots,L_2\}$ and $s\in\{0,\ldots,q-1\}$.

Based on this result, consider channel input and noise expansions as
\begin{align}
p_{l,s} \triangleq \textrm{Pr}\{\mathsf{X}_l=s\}=\frac{\left(1-e^{- q^l/E_{\textsf{X}}}\right)e^{- q^l s/E_{\textsf{X}}}}{1-e^{-q^{l+1}/E_{\textsf{X}}}},\nonumber
\end{align}
and
\begin{align}
q_{l,s} \triangleq \textrm{Pr}\{\mathsf{Z}_l=s\}=\frac{\left(1-e^{- q^l/E_{\textsf{Z}}}\right)e^{- q^l s/E_{\textsf{Z}}}}{1-e^{-q^{l+1}/E_{\textsf{Z}}}}.\nonumber
\end{align}
Then, the achievable rate by decoding carries (note that in $q$-ary expansion case, carries are still Bernoulli distributed) can be expressed as
\begin{equation}
\hat{R}_2 = \sum_{l=-L_1}^{L_2}\left[ H(p_{l,0:q-1}\otimes q_{l, 0:q-1})-H(q_{l, 0:q-1})\right],\label{equ:AEN_Achivable_Rate_BaseQ}
\end{equation}
where $p_{l,0:q-1}$ and $q_{l, 0:q-1}$ denote the distribution of expanded random variables at level $l$ for input and noise respectively; $\otimes$ represents for the vector convolution.

When implemented with enough number of levels in coding, the achievable rates given by \eqref{equ:AEN_Achivable_Rate_BaseQ} achieves the capacity of AEN channel for any $q\geq 2$. More precisely, as shown in the numerical result in Fig.~\ref{fig:AEN_Achievable_Rate_BaseQ}, expansion coding with larger $q$ can achieve a higher rate (although this enhancement becomes limited when $q$ gets greater than $10$). This property of the coding scheme can be utilized to trade-off number of levels ($L_1+L_2$) and the alphabet size ($q$) to achieve a certain rate at a given $\mathrm{SNR}$.
\begin{figure}[t]
 \centering
 \includegraphics[scale=1]{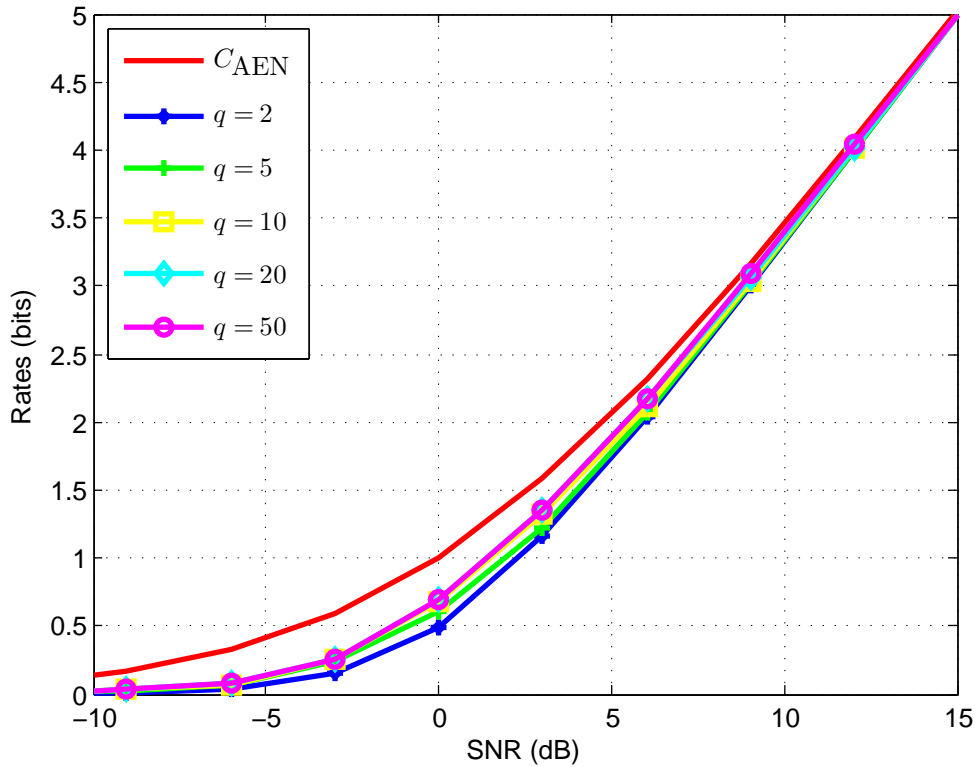}
 \caption{{\bf Numerical results for $q$-ary expansion.}
The achievable rates using $q$-ary expansion coding by decoding carries are illustrated.
}
\label{fig:AEN_Achievable_Rate_BaseQ}
\end{figure}

\section{Expansion Source Coding}
\label{sec:Source_Coding}

\subsection{Intuition}

Expansion source coding is a scheme of reducing the problem of compressing analog sources to compressing a set of discrete sources. In particular, consider an i.i.d. source $\mathsf{X}_1, \mathsf{X}_2,\ldots, \mathsf{X}_n$. A $(2^{nR},n)$-rate distortion code consists of an encoding function $\varphi:\mathbb{R}^n\to\mathcal{M}$, where $\mathcal{M}\triangleq\{1,\ldots,2^{nR}\}$, and a decoding function $\varsigma:\mathcal{M}\to\mathbb{R}^n$, which together map $\mathsf{X}_{1:n}$ to an estimate $\tilde{\mathsf{X}}_{1:n}$. Then, the rate and distortion pair $(R,D)$ is said to be achievable if there exists a sequence of $(2^{nR},n)$-rate distortion codes with $\lim\limits_{n\to\infty}\mathbb{E}[d(\mathsf{X}_{1:n},\tilde{\mathsf{X}}_{1:n})]\leq D$ for a given distortion measure of interest $d(\cdot,\cdot)$. The rate distortion function $R(D)$ is the infimum of such rates, and by Shannon's theorem \cite{Cover:IT1991}, we have:
\begin{equation}
R(D)=\min_{f(\tilde{x}|x):\mathbb{E}[d(\mathsf{X}_{1:n},\tilde{\mathsf{X}}_{1:n})]\leq D}I(\mathsf{X};\tilde{\mathsf{X}}),\nonumber
\end{equation}
where the optimal conditional distribution is given by $f^*(\tilde{x}|x)$.

The expansion source coding scheme proposed here is based on the observation that by expanding the original analog source into a set of independent discrete random variables, proper source coding schemes could be adopted for every expanded level. If this approximation in expansion is close enough, then the overall distortion obtained from expansion coding scheme is also close to the optimal distortion. More formally, consider the original analog source $\mathsf{X}$ and its approximation $\hat{\mathsf{X}}$ given by (omitting index $i$)
\begin{equation}
\hat{\mathsf{X}}\triangleq \mathsf{X}^{\text{sign}}\sum_{l=-L_1}^{L_2}2^l \mathsf{X}_l,
\end{equation}
where $\mathsf{X}^{\textrm{sign}}$ represents the sign of $\hat{\mathsf{X}}$ and takes values from $\{-,+\}$, and $\mathsf{X}_l$ is the expanded Bernoulli random variable at level $l$. Similarly, if we expand the estimate by
\begin{equation}
\hat{\tilde{\mathsf{X}}}\triangleq \tilde{\mathsf{X}}^{\textrm{sign}}\sum_{l=-L_1}^{L_2}2^l \tilde{\mathsf{X}}_l,
\end{equation}
where $\tilde{\mathsf{X}}^{\textrm{sign}}$ represents the sign of $\hat{\tilde{\mathsf{X}}}$, random variable taking values from $\{-,+\}$, and $\tilde{\mathsf{X}}_l$ is independent Bernoulli random variable at level $l$ after expansion.

Here, we reduce the original problem to a set of source coding
subproblems over levels $-L_1$ to $L_2$. Similar to the channel coding
case analyzed above, if $\hat{\mathsf{X}}\overset{d.}{\to}\mathsf{X}$,
and $\hat{\tilde{\mathsf{X}}}\overset{d.}{\to}\tilde{\mathsf{X}}^*$,
as $L_1$, $L_2\to\infty$, then the achieved rate distortion pair
approximates the original one. Note that, in general, the decomposition may not be sufficiently close for most of the sources, and the distribution for the estimate may not be sufficiently approximated. These situations add more distortion and result in a gap from the theoretical limit.


\subsection{Exponential Source Coding Problem Setup}
In this section, a particular lossy compression example is introduced to illustrate the effectiveness of expansion source coding.
Consider an i.i.d. exponential source sequence $\mathsf{X}_1,\ldots, \mathsf{X}_n$, i.e., omitting index $i$,  each variable has a pdf given by
\begin{equation}
f_{\mathsf{X}}(x)=\lambda e^{-\lambda x},\quad x\geq 0,\nonumber
\end{equation}
where $\lambda^{-1}$ is the mean of $\mathsf{X}$. Distortion measure of concern is the ``one-sided error distortion'' given by
\begin{equation}
d(x_{1:n},\tilde{x}_{1:n})=\left\{\begin{array}{ll}
\frac1n\sum\limits_{i=1}^n(x_i-\tilde{x}_i),&\textrm{if } x_{1:n}\succcurlyeq\tilde{x}_{1:n} ,\\
\infty,&\textrm{otherwise,}\end{array}\right.\nonumber
\end{equation}
where $\succcurlyeq$ indicates comparison of vectors element-wise (each element should be greater than the other).
This setup is equivalent to the one in \cite{Verdu:Exponential96}, where another distortion measure is considered.

\begin{lemma}[\cite{Verdu:Exponential96}]\label{lem:ExpSC_Rate_Distortion}
The rate distortion function for an exponential source with the one-sided error distortion is given by
\begin{align}
R(D)=\left\{\begin{array}{ll}
-\log  (\lambda D), &0\leq D\leq \frac{1}{\lambda},\\
0,&D>\frac{1}{\lambda}.
\end{array}
\right.
\end{align}
Moreover, the optimal conditional distribution to achieve the limit is given by
\begin{align}
f^*_{\mathsf{X}|\tilde{\mathsf{X}}}(x|\tilde{x})=\frac{1}{D} e^{- (x-\tilde{x})/D},\quad x\geq\tilde{x}\geq0.\label{equ:ExpSC_Optimal_Conditional_Distribution}
\end{align}
\end{lemma}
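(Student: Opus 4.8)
The plan is to pin down the minimum in the definition of $R(D)$ by two matching bounds: a converse showing $I(\mathsf{X};\tilde{\mathsf{X}})\ge -\log(\lambda D)$ for every admissible test channel, and an achievability part exhibiting a channel that attains it. Throughout I would work with the single-letter distortion $d(x,\tilde{x})=x-\tilde{x}$ for $x\ge\tilde{x}$ (and $\infty$ otherwise), so that admissibility means $\tilde{\mathsf{X}}\le\mathsf{X}$ almost surely together with $\mathbb{E}[\mathsf{X}-\tilde{\mathsf{X}}]\le D$. The central identity is $I(\mathsf{X};\tilde{\mathsf{X}})=h(\mathsf{X})-h(\mathsf{X}\mid\tilde{\mathsf{X}})$, and the key structural observation is that the one-sided constraint forces the error $\mathsf{W}\triangleq\mathsf{X}-\tilde{\mathsf{X}}$ to be non-negative, which is exactly what makes a maximum-entropy argument available.

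For the converse I would first record $h(\mathsf{X})=\log(e/\lambda)$ for the exponential source. Conditioned on $\tilde{\mathsf{X}}=\tilde{x}$, the variables $\mathsf{X}$ and $\mathsf{W}$ differ only by the constant $\tilde{x}$, so $h(\mathsf{X}\mid\tilde{\mathsf{X}})=h(\mathsf{W}\mid\tilde{\mathsf{X}})\le h(\mathsf{W})$. Since $\mathsf{W}\ge0$ with $\mathbb{E}[\mathsf{W}]\le D$, and the exponential law maximizes differential entropy among non-negative variables of a given mean, $h(\mathsf{W})\le\log(e\,\mathbb{E}[\mathsf{W}])\le\log(eD)$. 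Combining these gives $I(\mathsf{X};\tilde{\mathsf{X}})\ge\log(e/\lambda)-\log(eD)=-\log(\lambda D)$, which is the claimed bound for $0\le D\le 1/\lambda$.

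For achievability I would not prescribe $f(\tilde{x}\mid x)$ directly but instead fix the backward channel $\mathsf{X}=\tilde{\mathsf{X}}+\mathsf{W}$ with $\mathsf{W}\sim\mathrm{Exp}(1/D)$ independent of $\tilde{\mathsf{X}}$ — precisely the claimed $f^*_{\mathsf{X}\mid\tilde{\mathsf{X}}}$ in \eqref{equ:ExpSC_Optimal_Conditional_Distribution} — and then solve for the marginal of $\tilde{\mathsf{X}}$ that renders $\mathsf{X}$ exponential with parameter $\lambda$. Matching Laplace transforms yields $\mathcal{L}_{\tilde{\mathsf{X}}}(s)=\lambda D+(1-\lambda D)\,\lambda/(\lambda+s)$, i.e. $\tilde{\mathsf{X}}$ is the mixture placing mass $\lambda D$ at $0$ and mass $1-\lambda D$ on an $\mathrm{Exp}(\lambda)$ law; this is a legitimate distribution exactly when $\lambda D\le1$, which is what delimits the non-trivial regime $D\le 1/\lambda$. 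For this channel $\mathsf{W}\ge0$ is exponential with mean $D$, so the distortion constraint holds with equality and $h(\mathsf{X}\mid\tilde{\mathsf{X}})=h(\mathsf{W})=\log(eD)$, giving $I(\mathsf{X};\tilde{\mathsf{X}})=-\log(\lambda D)$ and closing the gap. For $D>1/\lambda$ the choice $\tilde{\mathsf{X}}\equiv0$ satisfies the constraint with $I=0$, so $R(D)=0$.

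The only genuinely delicate step is the achievability construction: the converse is a short chain of standard entropy inequalities, but realizing equality requires producing a reconstruction marginal consistent simultaneously with the fixed exponential source and the prescribed exponential error, and then verifying it is a bona fide distribution. The Laplace-transform inversion is where the boundary $D=1/\lambda$ naturally appears, so I expect the main care to lie in confirming non-negativity of the mixture weights and that the one-sided support condition $\mathsf{X}\ge\tilde{\mathsf{X}}$ indeed holds almost surely under the constructed joint law.
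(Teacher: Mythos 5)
Your proposal is correct and follows essentially the same route as the paper's proof: the converse is the identical chain $I(\mathsf{X};\tilde{\mathsf{X}})=h(\mathsf{X})-h(\mathsf{X}-\tilde{\mathsf{X}}\mid\tilde{\mathsf{X}})\ge \log(e/\lambda)-\log(e\,\mathbb{E}[\mathsf{X}-\tilde{\mathsf{X}}])\ge-\log(\lambda D)$, resting on the maximum-entropy property of the exponential law over non-negative variables with a mean constraint, and achievability uses the same additive-exponential test channel. The one place you go beyond the paper is the achievability verification: the paper merely asserts the test channel, whereas you invert the Laplace transform to exhibit the reconstruction marginal $\lambda D\,\delta_0+(1-\lambda D)\,\mathrm{Exp}(\lambda)$ and confirm it is a bona fide distribution exactly when $D\le 1/\lambda$ --- a worthwhile detail (mirroring the optimal input distribution \eqref{equ:AEN_Optimal_Input} on the channel-coding side) that the paper leaves implicit.
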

\begin{proof}
Proof is given in \cite{Verdu:Exponential96}, and it is based on the observation that among the ensemble of all probability density functions with positive support set and mean constraint, exponential distribution maximizes the differential entropy. By designing a test channel from $\tilde{\mathsf{X}}$ to $\mathsf{X}$, with additive noise distributed as exponential with parameter $1/D$, both the infimum mutual information and optimal conditional distribution can be characterized. Details can be found in Appendix~\ref{app:ExpSC_Rate_Distortion}.
\end{proof}


\subsection{Expansion Coding for Exponential Source}

Using Lemma \ref{lem:Exponential_Expansion}, we reconstruct the exponential distribution by a set of discrete Bernoulli random variables. In particular, the expansion of exponential source over levels ranging from $-L_1$ to $L_2$ can be expressed as
\begin{equation}
\hat{\mathsf{X}}_i=\sum_{l=-L_1}^{L_2}2^l\mathsf{X}_{i,l},\quad i=1,2,\ldots,n,\nonumber
\end{equation}
where $\mathsf{X}_{i,l}$ are Bernoulli random variables with parameter
\begin{align}
p_{l}\triangleq\textrm{Pr}\{\mathsf{X}_{i,l}=1\}=\frac{1}{1+e^{\lambda 2^l}}.\label{equ:ExpSC_Source_Parameter}
\end{align}
This expansion perfectly approximates exponential source by letting $L_1,L_2\to\infty$. Consider a similar expansion of the source estimate, i.e.,
\begin{equation}
\hat{\tilde{\mathsf{X}}}_i=\sum_{l=-L_1}^{L_2}2^l \tilde{\mathsf{X}}_{i,l},\quad i=1,2,\ldots,n, \nonumber
\end{equation}
where $\tilde{\mathsf{X}}_{i,l}$ is the resulting Bernoulli random variable with parameter $\tilde{p}_l\triangleq\textrm{Pr}\{\tilde{\mathsf{X}}_{i,l}=1\}$.
Utilizing the expansion method, the original problem of coding for a continuous source can be translated to a problem of coding for a set of independent binary sources. This transformation, although seemingly obvious, is valuable as one can utilize powerful coding schemes over discrete sources to achieve rate distortion limits with low complexity. In particular, we design two schemes for the binary source coding problem at each level.

\subsubsection{Coding with one-sided distortion}
In order to guarantee $x\geq \tilde{x}$, we formulate each level as a binary source coding problem under the binary one-sided distortion constraint: $d_{\textrm{O}}(x_l,\tilde{x}_l)=\bold{1}_{\{x_l> \tilde{x}_l\}}$.
Denoting the distortion at level $l$ as $d_l$, an asymmetric test channel (Z-channel) from $\tilde{\mathsf{X}}_{l}$ to $\mathsf{X}_{l}$ can be constructed, where
\begin{align}
&\textrm{Pr}\{\mathsf{X}_l=1|\tilde{\mathsf{X}}_l=0\}=\frac{d_l}{1-p_l+d_l},\nonumber\\
&\textrm{Pr}\{\mathsf{X}_l=0|\tilde{\mathsf{X}}_l=1\}=0.\nonumber
\end{align}
Based on this, we have $p_l-\tilde{p}_l=d_l$, and the achievable rate at a single level $l$ is given by
\begin{equation}
R_{\textrm{Z},l}=H(p_l)-(1-p_l+d_l)H\left(\frac{d_l}{1-p_l+d_l}\right).
\end{equation}
Due to the decomposability property as stated previously, the coding scheme considered is over a set of parallel discrete levels indexed by $l=-L_1,\ldots,L_2$. Thus, by adopting rate distortion limit achieving codes over each level, expansion coding scheme readily achieves the following result.
\begin{theorem}
\label{thm:ExpSC_Achievable_Rate_Z}
For an exponential source, expansion coding achieves the rate distortion pair given by
\emph{
\begin{align}
R_1&=\sum_{l=-L_1}^{L_2}R_{\textrm{Z},l},\label{equ:ExpSC_Achievable_Rate_R1}\\
D_1&=\sum_{l=-L_1}^{L_2}2^ld_l+2^{-L_2+1}/\lambda^2+2^{-L_1-1},\label{equ:ExpSC_Achievable_Rate_D1}
\end{align}}
for any $L_1,L_2>0$, and $d_l\in[0,0.5]$ for $l\in\{-L_1,\cdots,L_2\}$, where $p_l$ is given by \eqref{equ:ExpSC_Source_Parameter}.
\end{theorem}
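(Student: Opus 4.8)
The plan is to split the claim into its rate part $R_1$ and its distortion part $D_1$, viewing the expansion as a family of independent binary lossy source coding problems. By Lemma~\ref{lem:Exponential_Expansion} the bits $\mathsf{X}_{i,l}$ are mutually independent across levels, so a rate-distortion--limit achieving binary code (e.g. the polar construction of \cite{Korada:Source10}) can be applied to each level $l\in\{-L_1,\ldots,L_2\}$ separately, and the total description rate is the sum of the per-level rates. First I would verify that $R_{\textrm{Z},l}$ is exactly the mutual information induced by the stated Z-channel: from $\textrm{Pr}\{\mathsf{X}_l=0\mid\tilde{\mathsf{X}}_l=1\}=0$ and $\textrm{Pr}\{\mathsf{X}_l=1\mid\tilde{\mathsf{X}}_l=0\}=d_l/(1-p_l+d_l)$ one gets $\tilde{p}_l=p_l-d_l$, hence $H(\mathsf{X}_l\mid\tilde{\mathsf{X}}_l)=(1-p_l+d_l)H\big(d_l/(1-p_l+d_l)\big)$ and $I(\mathsf{X}_l;\tilde{\mathsf{X}}_l)=H(p_l)-(1-p_l+d_l)H\big(d_l/(1-p_l+d_l)\big)=R_{\textrm{Z},l}$. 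Summing the per-level rates yields $R_1=\sum_{l=-L_1}^{L_2}R_{\textrm{Z},l}$ by independence.

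The crucial structural observation for the distortion is that the Z-channel forbids the event $\{\tilde{\mathsf{X}}_l=1,\mathsf{X}_l=0\}$, so bit-wise $\tilde{\mathsf{X}}_{i,l}\le\mathsf{X}_{i,l}$ at every coded level, while $\tilde{\mathsf{X}}_{i,l}=0\le\mathsf{X}_{i,l}$ at every truncated level. Consequently $\hat{\tilde{\mathsf{X}}}_i=\sum_l 2^l\tilde{\mathsf{X}}_{i,l}\le\sum_l 2^l\mathsf{X}_{i,l}\le\mathsf{X}_i$ with probability one, which guarantees the ordering $\mathsf{X}_{1:n}\succcurlyeq\hat{\tilde{\mathsf{X}}}_{1:n}$ and hence that the distortion is finite rather than $\infty$. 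I would then write the expected distortion as $\mathbb{E}[\mathsf{X}-\hat{\tilde{\mathsf{X}}}]$ and split the (infinite) binary expansion of $\mathsf{X}$ into three bands of levels: the coded band $-L_1\le l\le L_2$, the high tail $l>L_2$, and the low tail $l<-L_1$. On the coded band each term contributes $2^l\,\mathbb{E}[\mathsf{X}_l-\tilde{\mathsf{X}}_l]=2^l(p_l-\tilde{p}_l)=2^l d_l$, which produces the first summand of $D_1$.

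It remains to bound the two truncation tails. For the low tail, using $p_l=1/(1+e^{\lambda 2^l})\le 1/2$ gives $\sum_{l=-\infty}^{-L_1-1}2^l p_l\le\tfrac12\sum_{l=-\infty}^{-L_1-1}2^l=2^{-L_1-1}$, the third summand. For the high tail the geometric weight $2^l$ grows, so a sharper estimate of $p_l$ is needed: here I would use $p_l<e^{-\lambda 2^l}$ together with the elementary inequality $e^{-y}\le 2/y^2$ for $y>0$ (from $e^y\ge y^2/2$), applied with $y=\lambda 2^l$, to obtain $2^l p_l<2^l\cdot 2/(\lambda 2^l)^2=2/(\lambda^2 2^l)$. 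Summing the resulting geometric series gives $\sum_{l=L_2+1}^{\infty}2^l p_l<\tfrac{2}{\lambda^2}\sum_{l=L_2+1}^{\infty}2^{-l}=2^{-L_2+1}/\lambda^2$, the second summand; adding the three contributions reproduces the claimed $D_1$.

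I expect the high-tail estimate to be the main obstacle, since it converts the doubly-exponential decay of $p_l$ into a clean $2^{-L_2}$-type bound through the inequality $e^{-y}\le 2/y^2$, whereas the remainder of the argument is bookkeeping once the per-level Z-channel and the bit-wise domination property have been established.
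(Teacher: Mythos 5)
Your proof is correct and follows essentially the same route as the paper's: the rate follows by summing per-level Z-channel rates over the independent expanded levels, and the distortion is decomposed into the coded band (contributing $\sum_l 2^l d_l$ via $p_l-\tilde p_l=d_l$) plus the two truncation tails, bounded by $2^{-L_1-1}$ using $p_l\le 1/2$ and by $2^{-L_2+1}/\lambda^2$ using a quadratic lower bound on $e^{\lambda 2^l}$ (the paper uses $e^{y}\ge 1+y+y^2/2$ where you use the equivalent $e^{y}\ge y^2/2$, yielding the identical bound $2^l p_l\le 2^{-l+1}/\lambda^2$). Your explicit verification of the Z-channel mutual information and of the bit-wise domination $\tilde{\mathsf{X}}_{i,l}\le\mathsf{X}_{i,l}$ guaranteeing finite one-sided distortion are details the paper leaves implicit, and they strengthen rather than change the argument.
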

\begin{proof}
See Appendix~\ref{app:ExpSC_Achievable_Rate_Z}.
\end{proof}

Note that, the last
two terms in \eqref{equ:ExpSC_Achievable_Rate_D1} are a result of the truncation and vanish in the limit of large number of levels. In later parts of this section, we characterize the number of levels required in order to bound the resulting distortion within a constant gap.

\subsubsection{Successive encoding and decoding}
\begin{figure}[t]
 \centering
 \includegraphics[scale=1]{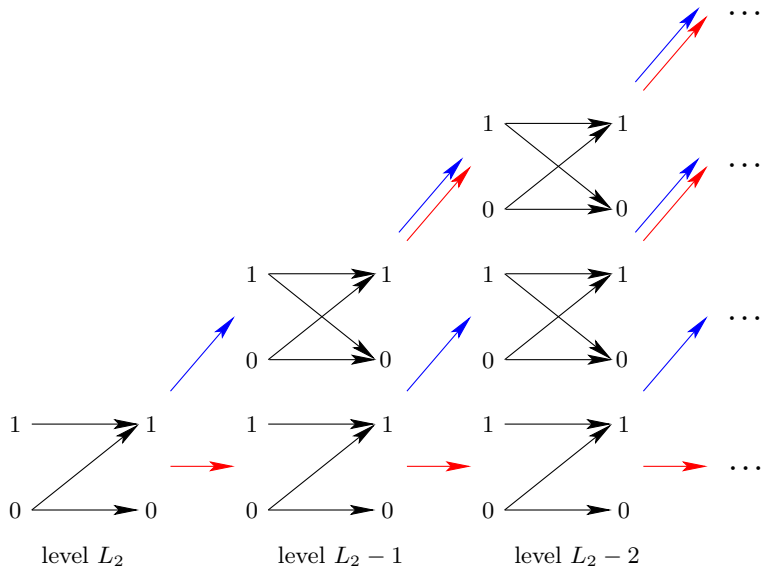}
 \caption{{\bf Illustration of successive encoding and decoding.} Encoding and decoding start from the highest level. A lower level is modeled as one-side distortion (test channel is Z-channel) if and only if estimates in all higher levels are decoded as equal to the source. In this illustration, red arrows represent for decoded as equal, while blue ones represent for decoded as unequal.
}
\label{fig:ExpSC_Successive_Coding}
\end{figure}

Note that it is not necessary to make sure $x_l\geq \tilde{x}_l$ for every $l$ to guarantee $x\geq \tilde{x}$. To this end, we introduce successive coding scheme, where encoding and decoding start from the highest level $L_2$ to the lowest. At a certain level, if all higher levels are encoded as equal to the source, then we must model this level as binary source coding with the one-sided distortion. Otherwise, we formulate this level as binary source coding with the symmetric distortion (see Figure~\ref{fig:ExpSC_Successive_Coding} for an illustration of this successive coding scheme). In particular, for the later case, the distortion of concern is Hamming distortion, i.e. $d_{\textrm{H}}(x_l,\tilde{x}_l)=\bold{1}_{\{x_l\neq \tilde{x}_l\}}$.
Denoting the equivalent distortion at level $l$ as $d_l$, i.e. $\mathbb{E}[\mathsf{X}_l-\tilde{X}_l]=d_l$, then the symmetric test channel from $\hat{\mathsf{X}}_l$ to $\mathsf{X}_l$ is modeled as
\begin{equation}
\textrm{Pr}\{\mathsf{X}_l=1|\hat{\mathsf{X}}_l=0\}=\textrm{Pr}\{\mathsf{X}_l=0|\tilde{\mathsf{X}}_l=1\}
=\frac{d_l}{1-2p_l+2d_l}.\nonumber
\end{equation}
Hence, the achievable rate at level $l$ is given by
\begin{equation}
R_{\textrm{X},l}=H(p_l)-H\left(\frac{d_l}{1-2p_l+2d_l}\right).
\end{equation}
Based on these, we have the following achievable result:
\begin{theorem}
\label{thm:ExpSC_Achievable_Rate_X}
For an exponential source, applying successive coding, expansion coding achieves the rate distortion pairs
\emph{
\begin{align}
R_2&=\sum_{l=-L_1}^{L_2}\left[\alpha_l R_{\textrm{Z},l} +\left(1-\alpha_l\right)R_{\textrm{X},l}\right],\label{equ:ExpSC_Achievable_Rate_R2}\\
D_2&=\sum_{l=-L_1}^{L_2}2^ld_l+2^{-L_2+1}/\lambda^2+2^{-L_1-1},\label{equ:ExpSC_Achievable_Rate_D2}
\end{align}}
for any $L_1,L_2>0$, and $d_l\in[0,0.5]$ for $l\in\{-L_1,\cdots,L_2\}$. Here, $p_l$ is given by \eqref{equ:ExpSC_Source_Parameter}, and the values of $\alpha_l$ are determined by:
\begin{itemize}
\item For $l=L_2$,
\begin{equation}
\alpha_{L_2}=1;\nonumber
\end{equation}
\item For $l<L_2$,
\begin{equation}
\alpha_l= \prod_{k=l+1}^{L_2}(1-d_k).\nonumber
\end{equation}
\end{itemize}
\end{theorem}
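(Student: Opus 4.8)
The plan is to mirror the one-sided construction behind Theorem~\ref{thm:ExpSC_Achievable_Rate_Z}, but with the test channel at each level chosen adaptively according to the outcomes of the higher levels. Two ingredients already in hand are used throughout: by Lemma~\ref{lem:Exponential_Expansion} the source bits $\mathsf{X}_{i,l}$ are i.i.d.\ $\mathrm{Bern}(p_l)$ across $i$ and \emph{mutually independent across levels} $l$; and rate-distortion-limit-achieving binary codes (polar or LDGM) exist for each per-level subproblem, realizing the Z-channel pair $(R_{\textrm{Z},l},d_l)$ under the hard constraint $\tilde{x}_{i,l}\le x_{i,l}$ and the symmetric pair $(R_{\textrm{X},l},\delta_l)$ with $\delta_l=d_l/(1-2p_l+2d_l)$. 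I would first make the successive scheme precise: encoding and decoding run from level $L_2$ downward, and after levels $L_2,\dots,l+1$ are processed I call index $i$ \emph{matched} at level $l$ when $\tilde{\mathsf{X}}_{i,k}=\mathsf{X}_{i,k}$ for every $k>l$. The level-$l$ block is split into the matched set $\mathcal{A}_l$ and its complement; the Z-channel code is applied on $\mathcal{A}_l$ and the symmetric code on the complement. Crucially, membership in $\mathcal{A}_l$ depends only on bits and reconstructions at levels above $l$, hence is independent of $\mathsf{X}_{i,l}$ by Lemma~\ref{lem:Exponential_Expansion}; so conditioned on the partition each part still carries i.i.d.\ $\mathrm{Bern}(p_l)$ source bits and the per-level codes apply verbatim.

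Next I would verify the one-sided constraint realization-by-realization, which is what keeps the distortion finite. Writing $\mathsf{X}-\hat{\tilde{\mathsf{X}}}=(\mathsf{X}-\hat{\mathsf{X}})+(\hat{\mathsf{X}}-\hat{\tilde{\mathsf{X}}})$, the first term is nonnegative since only nonnegative out-of-range bits are dropped. For the second term let $k^\ast$ be the highest level in $[-L_1,L_2]$ where $\hat{\mathsf{X}}$ and $\hat{\tilde{\mathsf{X}}}$ disagree; that index is by definition matched at $k^\ast$, so the Z-channel code — which forbids $\tilde{x}_{k^\ast}>x_{k^\ast}$ — was used there, forcing $x_{k^\ast}=1$ and $\tilde{x}_{k^\ast}=0$. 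Because the weight $2^{k^\ast}$ strictly exceeds the total weight $\sum_{l=-L_1}^{k^\ast-1}2^l$ of all lower levels, this single dominant bit gives $\hat{\mathsf{X}}\ge\hat{\tilde{\mathsf{X}}}$, hence $\mathsf{X}\succcurlyeq\hat{\tilde{\mathsf{X}}}$.

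For the rate, I would show that the probability of being matched at level $l$ is exactly $\alpha_l=\prod_{k=l+1}^{L_2}(1-d_k)$: on the currently matched indices the Z-channel gives $\Pr\{\mathsf{X}_k\neq\tilde{\mathsf{X}}_k\}=d_k$, and chaining these (independent, by the across-level independence) events from $L_2$ downward produces $\alpha_l$, with $\alpha_{L_2}=1$. As $n\to\infty$ the empirical matched fraction at level $l$ concentrates at $\alpha_l$, so the per-source-symbol description length at level $l$ tends to $\alpha_l R_{\textrm{Z},l}+(1-\alpha_l)R_{\textrm{X},l}$, and summing over $l$ gives $R_2$. For the distortion, the key computation is that the symmetric crossover $\delta_l$ is calibrated so that $\mathbb{E}[\mathsf{X}_l-\tilde{\mathsf{X}}_l]=p_l-\tilde{p}_l=d_l$, i.e.\ the \emph{same} mean gap as the Z-channel; thus, whichever code a level uses, its mean contribution to $\hat{\mathsf{X}}-\hat{\tilde{\mathsf{X}}}$ is $2^l d_l$. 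Summing yields coding distortion $\sum_l 2^l d_l$, and the two truncation terms $2^{-L_2+1}/\lambda^2+2^{-L_1-1}$ arise from discarding the levels outside $[-L_1,L_2]$ exactly as in the proof of Theorem~\ref{thm:ExpSC_Achievable_Rate_Z}, giving $D_2$.

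The hard part, I expect, is the rate argument's dependence bookkeeping. The partition $\mathcal{A}_l$ is generated by stochastic block codes at the higher levels, so I must justify both that the level-$l$ source bits stay conditionally i.i.d.\ $\mathrm{Bern}(p_l)$ given $\mathcal{A}_l$ and that the matched fraction concentrates at $\alpha_l$ even though those higher-level codes couple symbols within a block. The across-level independence from Lemma~\ref{lem:Exponential_Expansion} makes the first point clean, whereas the second requires the empirical joint type of each per-level rate-distortion code to concentrate on its design test channel; invoking this concentration, which is standard for the limit-achieving ensembles used here, is the step that needs the most care.
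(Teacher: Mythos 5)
Your proposal is correct and follows essentially the same route as the paper's (very terse) proof: compute $\alpha_l$ as the probability that all higher levels agree using the across-level independence from Lemma~\ref{lem:Exponential_Expansion}, express the per-level rate as the mixture $\alpha_l R_{\textrm{Z},l}+(1-\alpha_l)R_{\textrm{X},l}$, and note that both test channels are calibrated so that $p_l-\tilde{p}_l=d_l$, which makes the distortion expression identical to that of Theorem~\ref{thm:ExpSC_Achievable_Rate_Z}. The extra steps you supply --- the dominant-bit argument showing the highest disagreeing level forces $\hat{\mathsf{X}}\geq\hat{\tilde{\mathsf{X}}}$, and the concentration bookkeeping for the matched fraction --- are details the paper leaves implicit, and they strengthen rather than alter the argument.
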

\begin{proof}
See Appendix~\ref{app:ExpSC_Achievable_Rate_X}.
\end{proof}

In this sense, the achievable pairs in both theorems are given by optimization problems over a set of parameters $\{d_{-L_1},\ldots,d_{L_2}\}$. However, the problems are not convex, so an effective theoretical analysis may not be performed here for the optimal solution. But, by a heuristic choice of $d_l$, we can still get a good performance. Inspired by the fact that the optimal scheme models noise as exponential with parameter $1/D$ in the test channel, we design $d_l$ as the expansion parameter from this distribution, i.e., we consider
\begin{eqnarray}
d_l=\frac{1}{1+e^{2^l/D}}.\label{equ:ExpSC_Distortion_Parameter}
\end{eqnarray}

We note that higher levels get higher priority and
lower distortion with this choice, which is consistent with the intuition. This choice of $d_l$ may not guarantee any optimality, although simulation results imply that this can be an approximately optimal solution. In the following, we show that the proposed expansion coding scheme achieves within a constant gap to the rate distortion function (at each distortion value).
\begin{theorem}
\label{thm:ExpSC_Main_Result}
For any $D\in[0,1/\lambda]$, there exists a constant $c>0$ (independent of $\lambda$ and $D$), such that if
\begin{itemize}
\item $L_1\geq -\log D$;
\item $L_2\geq -\log \lambda^2D$,
\end{itemize}
then, with a choice of $d_l$ as in \eqref{equ:ExpSC_Distortion_Parameter}, the achievable rates obtained from expansion coding schemes are both within $c=5log(e)$ bits gap to Shannon rate distortion function, i.e.,
\begin{align}
&R_1-R(D_1)\leq c,\nonumber\\
&R_2-R(D_2)\leq c.\nonumber
\end{align}
\end{theorem}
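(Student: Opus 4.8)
The plan is to exploit the exact duality between this statement and the channel-coding result of Theorem~\ref{thm:AEN_Achivable_Rate_Main_Result}. Writing the source parameter as $p_l=1/(1+e^{2^l/(1/\lambda)})$ and the distortion parameter as $d_l=1/(1+e^{2^l/D})$, both sequences have precisely the functional form of the noise parameter $q_l$ in Lemma~\ref{lem:AEN_Entropy_Bound}, with effective ``means'' $1/\lambda$ and $D$. Setting $\eta_X\triangleq-\log\lambda$ and $\eta_D\triangleq\log D$, the sequence $\{d_l\}$ is therefore a left-shift of $\{p_l\}$ by $\eta_X-\eta_D=-\log(\lambda D)=R(D)$ levels, which is the source-coding analogue of the observation that $\{p_l\}$ is a left-shift of $\{q_l\}$ in the channel case. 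The entire argument reduces to showing that the summed per-level rate exceeds this shift $R(D)$ by at most a constant.

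First I would control the distortion. Since $d_l$ is the binary-expansion parameter of an exponential variable with mean $D$, Lemma~\ref{lem:Exponential_Expansion} gives $\sum_{l=-\infty}^{\infty}2^l d_l=D$, so the truncated sum in \eqref{equ:ExpSC_Achievable_Rate_D1} obeys $\sum_{l=-L_1}^{L_2}2^l d_l\le D$. Substituting the hypotheses $L_1\ge-\log D$ and $L_2\ge-\log(\lambda^2 D)$ into the two correction terms bounds them by $D/2$ and $2D$ respectively, giving $D_1\le (7/2)D$ and hence $R(D_1)=-\log(\lambda D_1)\ge R(D)-\log(7/2)$. This converts the claim into the task of upper bounding $R_1$ by $R(D)$ plus a constant (with the caveat, to be checked separately, that $D_1\le1/\lambda$ so that $R(D_1)$ is indeed given by $-\log(\lambda D_1)$).

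Next I would bound $R_1=\sum_l R_{\textrm{Z},l}$ by splitting the level index into three regions relative to $\eta_D$ and $\eta_X$. In the middle region $\eta_D<l\le\eta_X$ I use the trivial per-level bound $R_{\textrm{Z},l}\le H(p_l)\le 1$; there are at most $\eta_X-\eta_D+1=R(D)+1$ such integer levels, contributing at most $R(D)+1$. For the high region $l>\eta_X$ the source bit is nearly deterministic, so $R_{\textrm{Z},l}\le H(p_l)$, and the exponential tail bound \eqref{equ:AEN_Entropy_Bound1} of Lemma~\ref{lem:AEN_Entropy_Bound} (applied with $\eta=\eta_X$) makes $\sum_{l>\eta_X}H(p_l)$ a convergent geometric series bounded by a constant multiple of $\log e$. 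For the low region $l\le\eta_D$ both $p_l$ and $d_l$ are close to $1/2$, so $\tilde p_l=p_l-d_l=O(2^{l-\eta_D})$, and a Taylor estimate of the $Z$-channel mutual information around $p_l=d_l=1/2$ shows $R_{\textrm{Z},l}$ decays geometrically in $\eta_D-l$, again summing to a constant. Collecting the three regions yields $R_1$ bounded by $R(D)$ plus a constant, and combining with the distortion step gives $R_1-R(D_1)\le 5\log e$ once the constants are tallied.

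Finally, for the successive scheme I would either verify the per-level inequality $R_{\textrm{X},l}\le R_{\textrm{Z},l}$, which (since $\alpha_l\in[0,1]$ and $D_2=D_1$) immediately gives $R_2\le R_1$ and hence the same gap, or rerun the identical three-region decomposition directly on $R_2$, using that both $R_{\textrm{X},l}$ and $R_{\textrm{Z},l}$ are at most $1$ in the middle and decay geometrically in the tails. I expect the main obstacle to be the constant-tracking in the transition regions: the shift $R(D)=-\log(\lambda D)$ is generally non-integer, so the middle-region count and the alignment of the $\{p_l\}$ and $\{d_l\}$ staircases carry fractional offsets, and because $R_{\textrm{Z},l}$ is not exactly $H(p_l)-H(d_l)$ there is no clean telescoping, forcing the tail sums to be estimated directly from the mutual-information expression. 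Ensuring that all of these pieces, together with the $\log(7/2)$ loss from $D_1$, fit inside the advertised budget of $5\log e$ bits is where the care is required.
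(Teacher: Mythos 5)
Your proposal is sound and reaches the same conclusion, but it organizes the rate bound differently from the paper. The paper first telescopes the ``main term'': since $d_l=p_{l+\gamma+\xi}$ with $\gamma=-\log\lambda$, $\xi=-\log D$, reindexing gives $\sum_{l}[H(p_l)-H(d_l)]\le\gamma+\xi=R(D)$ exactly (the leftover boundary terms are each in $[0,1]$), and then bounds the residual $\Delta_l=H(d_l)-(1-p_l+d_l)H\bigl(d_l/(1-p_l+d_l)\bigr)$ by geometric tails on either side of $l=-\xi$, totalling $3\log e$. You instead skip the telescoping and bound $R_{\textrm{Z},l}$ directly in three regions, paying $R(D)+1$ for the middle band and constants for the tails; this works but costs you the extra $+1$ from the integer level count, and your low-region claim should be stated as $R_{\textrm{Z},l}\le H(p_l-d_l)=O\bigl((\eta_D-l)2^{l-\eta_D}\bigr)$ rather than purely geometric decay --- the linear factor is harmless for summability (exactly as in Lemma~\ref{lem:AEN_Equivalent_Entropy_Bound}) but should not be elided. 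For the distortion-to-rate conversion, your multiplicative bound $D_1\le 3.5\,D$ giving $R(D_1)\ge R(D)-\log 3.5$ is actually tighter than the paper's tangent-line (convexity) bound $R(D)\le R(D_1)+\tfrac{\log e}{D}(D_1-D)\le R(D_1)+2.5\log e$, and your caveat about $D_1>1/\lambda$ is handled the same way in either route. Your treatment of $R_2$ via $R_{\textrm{X},l}\le R_{\textrm{Z},l}$ and $D_2=D_1$ is exactly the paper's. On the constant: the paper's own appendix arrives at $c=5.5\log e$ despite the theorem stating $5\log e$, so your worry about fitting the advertised budget is well-founded but applies equally to the source; the substantive content --- a gap bounded by an absolute constant independent of $\lambda$ and $D$ --- is established by both arguments.
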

\begin{proof}
See Appendix~\ref{app:ExpSC_Main_Result}.
\end{proof}

\begin{remark}
We remark that the requirement for highest level is much more restricted than channel coding case. The reason is that number of levels should be large enough to approximate both rate and distortion. From the proof in Appendix~\ref{app:ExpSC_Main_Result}, it is evident that $L_2\geq -\log \lambda$ is enough to bound the rate, however, another $-\log \lambda D$ is required to approximate the distortion closely. (If only the relative distortion is considered, these extra levels may not be essential.)
\end{remark}

\begin{remark}
Expansion source coding can be also applied to other source statistics. For instance, for Laplacian (two-sided symmetric exponential) sources, the proposed coding scheme can still approximately approach to the Shannon rate distortion limit with a small constant gap \cite{si2014lossy}, where the sign bit of Laplacian is considered separately and encoded perfectly with $1$ bit, and each expanded level is encoded with Hamming distortion, for low distortion regime.
\end{remark}

\subsection{Numerical Results}
Numerical results showing achievable rates along with the rate distortion limit are given in Fig.~\ref{fig:ExpSC_Achievable_Rate}. It is evident that both forms of  expansion coding perform within a constant gap of the limit over the whole distortion region, which outperforms existing scalar quantization technique, especially in the low distortion regime. Since samples are independent, the simulations for vector quantization are expected to be close to scalar quantization and omitted in this result.

Theorem~\ref{thm:ExpSC_Main_Result} shows that this gap is bounded by a constant. Here, numerical results show that the gap is not necessarily as wide as predicted by the analysis. Especially for the low distortion region, the gap is numerically found to correspond to $0.24$ bits and $0.43$ bits for each coding scheme respectively.

\begin{figure}[t]
 \centering
 \includegraphics[scale=1]{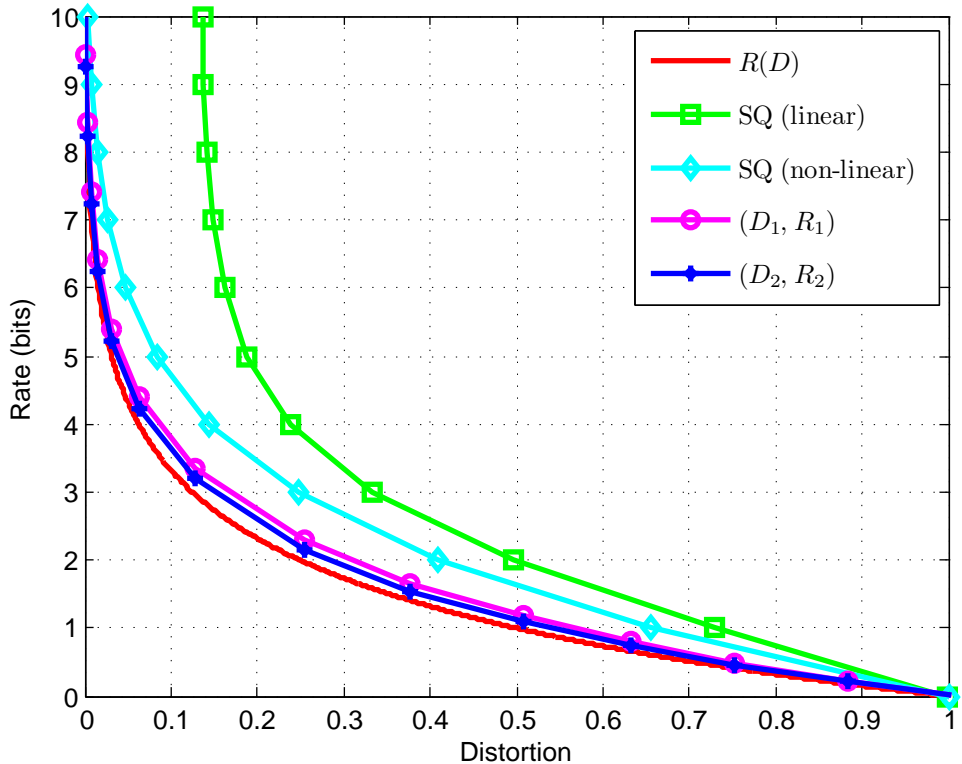}
 \caption{{\bf Achievable rate distortion pairs using expansion coding for exponential distribution ($\lambda=1$) with one-sided error distortion.} $R(D)$ (red-solid) is rate distortion limit; $(D_1,R_1)$ (purple) is given by Theorem~\ref{thm:ExpSC_Achievable_Rate_Z}; $(D_2,R_2)$ (blue) is given by Theorem~\ref{thm:ExpSC_Achievable_Rate_X}. Linear and non-linear scalar quantization methods are simulated for comparison.
}
\label{fig:ExpSC_Achievable_Rate}
\end{figure}


\section{Conclusion}
\label{sec:Conclusion}

In this paper, the method of expansion coding is proposed to construct
good codes for analog channel coding and source coding. With a perfect
or approximate decomposition of channel noise and original sources, we
consider coding over independent parallel representations, thus
providing a foundation for reducing the original problems to a set of
parallel simpler subproblems. In particular, via expansion channel
coding, we consider coding over $q$-ary channels for each expansion
level. This approximation of the original channel together with
capacity achieving codes for each level (to reliably transmit messages
over each channel constructed) and Gallager's method (to achieve
desired communication rates for each channel) allow for constructing
near-capacity achieving codes for the original channel. Similarly, we
utilize expansion source coding to adopt discrete source codes that
achieve the rate distortion limit on each level after expansion, and design codes achieving near-optimal performance. Theoretical analysis and numerical results are provided to detail performance guarantees of the proposed expansion coding scheme.

One significant benefit from expansion coding is coding complexity. As indicated in
theoretical analysis, approximately $-2\log\epsilon +\log \textrm{SNR}$ and $-2\log (\lambda^2 D)$ number of levels are sufficient for the channel coding and source coding schemes respectively. Thus, by choosing ``good'' low complexity optimal codes within each level (such as polar codes \cite{Arikan:Channel08,Korada:Source10}), the overall complexity of the coding scheme can be made small for the original continuous-valued channel coding and source coding problems.

Although the discussion in this paper focuses on AEN channels as well
as exponential sources, expansion coding scheme is a more general
framework and its applications are not limited to such
scenarios. Towards this end, any channel noise and any source with
decomposable distribution could fit into the range of expansion
coding. Moreover, the idea of expansion could also be generalized to
network information theory, where it can play a role similar to deterministic models \cite{Avestimehr:Wireless11}. The expanded
channels are not completely deterministic in our case; they possess
different noise levels, which may enable the construction of precise
models for network analysis.


\bibliographystyle{IEEEtran}


\appendices


\section{Proof of Lemma \ref{lem:Exponential_Expansion}}
\label{app:Exponential_Expansion_Proof}
The ``if'' part follows by extending the one given in
\cite{Marsaglia:Random71}, which considers the expansion
of a truncated exponential random variable.
We show the result by calculating the
moment generating function of $\mathsf{B}$.
Using the assumption that $\{\mathsf{B}_l\}_{l\in\mathbb{Z}}$ are
mutually independent, we have
\begin{align}
M_{\mathsf{B}}(t)  =\mathbb{E}\left[e^{t{\mathsf{B}}}\right]
        =\prod_{l=-\infty}^{\infty}\mathbb{E}\left[e^{t2^l {\mathsf{B}}_l}\right].\nonumber
\end{align}
Noting that $\mathsf{B}_l$ are Bernoulli random variables, we have
\begin{align}
\mathbb{E}\left[e^{t2^l {\mathsf{B}}_l}\right]=
\frac{e^{t2^l}}{1+e^{\lambda 2^l}}+\left(1-\frac{1}{1+e^{\lambda 2^l}}\right)
=\frac{1+e^{(t-\lambda ) 2^l}}{1+e^{-\lambda 2^l}}.\nonumber
\end{align}
Then, using the fact that for any constant $\alpha\in\mathbb{R}$,
\begin{align}
\prod_{l=0}^{n}(1+e^{\alpha 2^l})=\frac{1-e^{2^{n+1}\alpha}}{1-e^{\alpha}},\nonumber
\end{align}
we can obtain the following for $t<\lambda$,
\begin{align}
\prod_{l=0}^{\infty} \mathbb{E}\left[e^{t2^l {\mathsf{B}}_l}\right]
=\lim_{n\rightarrow\infty}\prod_{l=0}^{n}\frac{1+e^{(t-\lambda ) 2^l}}{1+e^{-\lambda 2^l}}
= \frac{1-e^{-\lambda}}{1-e^{t-\lambda}}.\label{equ:Exponential_Expansion_Proof_Part1}
\end{align}
Similarly, for the negative part, we have
\begin{align}
\prod_{l=-n}^{-1}(1+e^{\alpha 2^l})=\frac{1-e^{\alpha}}{1-e^{\alpha2^{-n}}},\nonumber
\end{align}
which further implies that
\begin{align}
\prod_{l=-\infty}^{-1}\mathbb{E}\left[e^{t2^l {\mathsf{B}}_l}\right]
=\lim_{n\rightarrow\infty}\frac{1-e^{t-\lambda}}
{1-e^{(t-\lambda)2^{-n}}}\frac{1-e^{-\lambda2^{-n}}}{1-e^{-\lambda}}
=\frac{\lambda(1-e^{t-\lambda})}{(\lambda-t)(1-e^{-\lambda})}.
\label{equ:Exponential_Expansion_Proof_Part2}
\end{align}

Thus, finally for any $t<\lambda$, combining equations~\eqref{equ:Exponential_Expansion_Proof_Part1} and~\eqref{equ:Exponential_Expansion_Proof_Part2}, we get
\begin{align}
M_{\mathsf{B}}(t)=\frac{\lambda}{\lambda-t}.\nonumber
\end{align}
The observation that this is the moment generation function for an exponentially distributed random variable with parameter $\lambda$ concludes the proof.

The independence relationships between levels in ``only if'' part can be simply verified using memoryless property of the exponential distribution. Here, we just
need to show the parameter for Bernoulli random variable at each level. Observe that for any $l\in\mathbb{Z}$,
\begin{align}
\textrm{Pr}\{\mathsf{B}_l=1\}=\textrm{Pr}\{\mathsf{B}\in\cup_{k\in\mathbb{N}^+}[2^l(2k-1),2^l(2k))\}
=\sum_{k\in\mathbb{N}^+}\textrm{Pr}\{2^l(2k-1)\leq \mathsf{B}<2^l(2k)\}.
\label{equ:Exponential_Expansion_Proof_Part3}
\end{align}
Using cdf of exponential distribution, we obtain
\begin{align}
\textrm{Pr}\{2^l(2k-1)\leq \mathsf{B}<2^l(2k)\}=e^{-\lambda2^l(2k-1)}-e^{-\lambda2^l(2k)}
=e^{-\lambda2^l(2k)}\left(e^{\lambda 2^l}-1\right).\nonumber
\end{align}
Using this in \eqref{equ:Exponential_Expansion_Proof_Part3}, we have
\begin{align}
\textrm{Pr}\{{\mathsf{B}}_l=1\}=\sum_{k=1}^{\infty}e^{-\lambda2^l(2k)}\left(e^{\lambda 2^l}-1\right)=\frac{1}{e^{\lambda 2^l}+1}. \nonumber
\end{align}


\section{Proof of Lemma \ref{lem:AEN_Entropy_Bound}}
\label{app:AEN_Entropy_Bound}

From \eqref{equ:AEN_Noise_Parameter}, and $\eta\triangleq\log E_{\mathsf{Z}}$, we have
\begin{equation}
q_l=\frac{1}{1+e^{2^l/E_{\mathsf{Z}}}}=\frac{1}{1+e^{2^{l-\eta}}}.\nonumber
\end{equation}
By definition of entropy, we obtain
\begin{align}
H(q_l)  &=-q_l\log q_l-(1-q_l)\log (1-q_l)\nonumber\\
        &=-\frac{1}{1+e^{2^{l-\eta}}}\log \frac{1}{1+e^{2^{l-\eta}}}-\frac{e^{2^{l-\eta}}}{1+e^{2^{l-\eta}}}\log \frac{e^{2^{l-\eta}}}{1+e^{2^{l-\eta}}}.\nonumber
\end{align}

When $l\leq\eta$, we obtain a lower bound as
\begin{align}
H(q_l)  &=\frac{1}{1+e^{2^{l-\eta}}}\log \left(1+e^{2^{l-\eta}}\right)+\frac{e^{2^{l-\eta}}}{1+e^{2^{l-\eta}}}\log \left(\frac{1+e^{2^{l-\eta}}}{e^{2^{l-\eta}}}\right)\nonumber\\
        &=\log \left(1+e^{2^{l-\eta}}\right) -\frac{e^{2^{l-\eta}}}{1+e^{2^{l-\eta}}}\log e \cdot2^{l-\eta}\nonumber\\
        &\stackrel{(a)}{>}\log(1+1)-\log e\cdot 2^{l-\eta}\nonumber\\
        &=1-\log e \cdot2^{l-\eta},\nonumber
\end{align}
where $(a)$ is due to $e^{2^{l-\eta}}>1$ and $-e^{2^{l-\eta}}/(1+e^{2^{l-\eta}})>-1$.

On the other hand, when $l> \eta$, we have
\begin{align}
H(q_l)  &=\frac{1}{1+e^{2^{l-\eta}}}\log \left(1+e^{2^{l-\eta}}\right)+\frac{e^{2^{l-\eta}}}{1+e^{2^{l-\eta}}}\log \left(\frac{1+e^{2^{l-\eta}}}{e^{2^{l-\eta}}}\right)\nonumber\\
        &\stackrel{(b)}{<}\frac{1}{1+e^{2^{l-\eta}}}\log\left(2e^{2^{l-\eta}}\right)+ \log \left(1+e^{-2^{l-\eta}}\right)\nonumber\\
        &\stackrel{(c)}{<}\frac{1+2^{l-\eta}\cdot\log e}{1+e^{2^{l-\eta}}}+ e^{-2^{l-\eta}}\cdot\log e\nonumber\\
        &\stackrel{(d)}{<}\frac{1+2^{l-\eta}\cdot\log e}{1+1+2^{l-\eta}+2^{2({l-\eta})}/2}+ \frac{\log e}{1+2^{l-\eta}}\nonumber\\
        &\stackrel{(e)}{<}2^{\eta-l+1}\cdot\log e + 2^{\eta-l}\cdot\log e\nonumber\\
	&=3\log e\cdot2^{\eta-l} ,\nonumber
\end{align}
where
\begin{itemize}
\item[$(b)$] is from $1<e^{2^{l-\eta}}$ and $e^{2^{l-\eta}}/(1+e^{2^{l-\eta}})<1$;
\item[$(c)$] is from $\log (1+\alpha)< \alpha \log e$ for any $0<\alpha<1$;
\item[$(d)$] is from $e^{\alpha} > 1+\alpha+\alpha^2/2>1+\alpha$ for any $\alpha>0$;
\item[$(e)$] is from $1+2^{l-\eta}\cdot\log e<(2+2^{l-\eta}+2^{2({l-\eta})}/2)(2^{{\eta-l}+1}\cdot\log e)$
and $1<(1+2^{l-\eta})2^{{\eta-l}}$ for any $l$ and $\eta$.
\end{itemize}


\section{Proof of Lemma \ref{lem:AEN_Equivalent_Entropy_Bound}}
\label{app:AEN_Equivalent_Entropy_Bound}

By definition, $\tilde{q}_l=c_l\otimes q_l$, so its behavior is closely related to carries. Note that for any $l$, we have
\begin{align}
\tilde{q}_l=c_l(1-q_l)+q_l(1-c_l)
=q_l+c_l(1-2q_l)
\geq q_l,\nonumber
\end{align}
where the last inequality holds due to $q_l<1/2$ and $c_l\geq 0$.
Then, for $l\leq \eta$, we have
\begin{equation}
H(\tilde{q}_l)\geq H(q_l)>1-\log e\cdot2^{l-\eta},\nonumber
\end{equation}
where the first inequality holds due to monotonicity of entropy, and the last inequality is due to \eqref{equ:AEN_Entropy_Bound2} in  Lemma~\ref{lem:AEN_Entropy_Bound}. For the $l>\eta$ part, we need to characterize carries first. We have the following assertion:
\begin{equation}
c_l<2^{\eta-l+1}-\frac{2}{1+e^{2^{l-\eta}}},\quad \textrm{for } \; l>\eta,\label{equ:AEN_Equivalent_Noise_Proof1}
\end{equation}
and the proof is based on the following induction analysis. For $l=\eta+1$, this is simply true, because $c_l<1/2$ for any $l$. Assume \eqref{equ:AEN_Equivalent_Noise_Proof1}
 is true for level ${l>\eta}$, then at the ${l}+1$ level, we have
\begin{align}
c_{l+1} &=p_lq_l(1-c_l)+p_l(1-q_l)c_l+(1-p_l)q_lc_l+p_lq_lc_l\nonumber\\
        &=p_l(c_l+q_l-2q_lc_l)+q_lc_l\nonumber\\
        &\stackrel{(a)}{<}\frac12(c_l+q_l-2q_lc_l)+q_lc_l\nonumber\\
        &=\frac12(c_l+q_l)\nonumber\\
        &\stackrel{(b)}{<}\frac12\left(2^{\eta-l+1}-\frac{2}{1+e^{2^{l-\eta}}}
        +\frac{1}{1+e^{2^{l-\eta}}}\right)\nonumber\\
        &\stackrel{(c)}{<}2^{-({l-\eta}+1)+1}-\frac{2}{1+e^{2^{{l-\eta}+1}}},\nonumber
\end{align}
where
\begin{itemize}
\item[$(a)$] is due to $p_l<1/2$ and $c_l+q_l-2q_lc_l=c_l(1-2q_l)+q_l>0$;
\item[$(b)$] is due to the assumption \eqref{equ:AEN_Equivalent_Noise_Proof1} for level $l$;
\item[$(c)$] is due to the fact that $1/[2(1+e^{2^{l-\eta}})]>2/(1+e^{2^{{l-\eta}+1}})$ holds for any $l>\eta$. To this end, the assertion also holds for level $l+1$, and this completes the proof of \eqref{equ:AEN_Equivalent_Noise_Proof1}.
\end{itemize}

Using \eqref{equ:AEN_Equivalent_Noise_Proof1}, we obtain that for any $l>\eta$
\begin{align}
\tilde{q}_l &=q_l+c_l(1-2q_l)\nonumber\\
            &<\frac{1}{1+e^{2^{l-\eta}}}+\left(2^{\eta-l+1}-\frac{2}{1+e^{2^{l-\eta}}}\right)
            \left(1-\frac{2}{1+e^{2^{l-\eta}}}\right)\nonumber\\
            &=2^{\eta-l+1}-\frac{1+2^{\eta-l+2}}{1+e^{2^{l-\eta}}}
            +\frac{4}{(1+e^{2^{l-\eta}})^2}\nonumber\\
            &<2^{\eta-l+1},\label{equ:AEN_Equivalent_Noise_Proof2}
\end{align}
where the last inequality holds due to $(1+2^{\eta-l+2})(1+e^{2^{l-\eta}})>4$ for any $l>\eta$.

Finally, we obtain
\begin{align}
H(\tilde{q}_l)  &\stackrel{(d)}{<} H(2^{\eta-l+1})\nonumber\\
                &=-2^{\eta-l+1}\log (2^{\eta-l+1})-(1-2^{\eta-l+1})\log(1-2^{\eta-l+1})\nonumber\\
                &\stackrel{(e)}{<}({l-\eta}-1)\cdot 2^{\eta-l+1}+(1-2^{\eta-l+1})2^{\eta-l+1}2\log e\nonumber\\
                &\stackrel{(f)}{<} (l-\eta)\cdot 2^{\eta-l+1}+(l-\eta)2^{\eta-l+1}2\log e\nonumber\\
                &<3\log e\cdot ({l-\eta})\cdot 2^{\eta-l+1}\nonumber\\
                &=6\log e\cdot ({l-\eta})\cdot 2^{\eta-l},\nonumber
\end{align}
where
\begin{itemize}
\item[$(d)$] is from \eqref{equ:AEN_Equivalent_Noise_Proof2} and the monotonicity of entropy;
\item[$(e)$] is from $-\log (1-\alpha)<2\alpha\log e$ for any $\alpha\leq 1/2$;
\item[$(f)$] is from $1-2^{\eta-l+1}<l-\eta$ for any $l>\eta$.
\end{itemize}

From the proof, the information we used for $p_l$ is that $p_l<1/2$,
so this bound holds uniformly for any SNR.


\section{Proof of Theorem \ref{thm:AEN_Achivable_Rate_Main_Result}}
\label{app:AEN_Achivable_Rate_Main_Result}

We first prove that $\hat{R}_2$ achieves capacity. Denote $\xi=\log E_{\mathsf{X}}$ and $\eta= \log E_{\mathsf{Z}}$. Then, we have an important observation that
\begin{align}
p_{l}=\frac{1}{1+e^{2^{l}/2^{\xi}}}=q_{l+\eta-\xi},\label{equ:AEN_Rate_Gap_Proof1}
\end{align}
which shows that channel input is a shifted version of noise with respect to expansion levels (see Fig.~\ref{fig:AEN_Expanded_Level} for intuition). Based on this, we have
\begin{align}
\hat{R}_2   &=\sum_{l=-L_1}^{L_2}\left[ H(p_l\otimes q_l)-H(q_l) \right]\nonumber\\
            &\stackrel{(a)}{\geq}\sum_{l=-L_1}^{L_2}\left[ H(p_l)-H(q_l)\right]\nonumber\\
            &\stackrel{(b)}{=}\sum_{l=-L_1}^{L_2}\left[ H(q_{l+\eta-\xi})-H(q_l)\right]\nonumber\\
            &=\sum_{l=-L_1+\eta-\xi}^{L_2+\eta-\xi}H(q_l) -\sum_{l=-L_1}^{L_2}H(q_l)\nonumber\\
            &=\sum_{l=-L_1+\eta-\xi}^{-L_1-1}H(q_l) -\sum_{l=L_2+\eta-\xi+1}^{L_2}H(q_l)\nonumber\\
            &\stackrel{(c)}{\geq}\sum_{l=-L_1+\eta-\xi}^{-L_1-1}\left[1-2^{l-\eta}\log e \right] -\sum_{l=L_2+\eta-\xi+1}^{L_2}2^{\eta-l}3\log e\nonumber\\
            &\stackrel{(d)}{\geq}(\xi-\eta)-2^{-L_1-\eta}\log e -2^{-L_2+\xi}3\log e \nonumber\\
            &\stackrel{(e)}{\geq}\log \left(\frac{E_{\mathsf{X}}}{E_{\mathsf{Z}}}\right)-\epsilon\log e-3\epsilon\log e\epsilon\nonumber\\
            &\stackrel{(f)}{\geq}\log \left(1+\frac{E_{\mathsf{X}}}{E_{\mathsf{Z}}}\right)-\frac{E_{\mathsf{Z}}}{E_{\mathsf{X}}}\log e-4\epsilon\log e\nonumber\\
            &\stackrel{(g)}{\geq}\log \left(1+\frac{E_{\mathsf{X}}}{E_{\mathsf{Z}}}\right)-5\epsilon\log e,\label{equ:AEN_Rate_Gap_Proof2}
\end{align}
where
\begin{itemize}
\item[$(a)$] is due to $p_l\otimes q_l =p_l(1-q)+(1-p_l)q_l\geq p_l$, and monotonicity of entropy;
\item[$(b)$] follows from \eqref{equ:AEN_Rate_Gap_Proof1};
\item[$(c)$] follows from \eqref{equ:AEN_Entropy_Bound1} and \eqref{equ:AEN_Entropy_Bound2} in Lemma~\ref{lem:AEN_Entropy_Bound};
\item[$(d)$] holds as
\begin{equation}
\sum_{l=-L_1+\eta-\xi}^{-L_1-1} 2^{l-\eta}\leq \sum_{l=-\infty}^{-L_1-1} 2^{l-\eta} =2^{-L_1-\eta},\nonumber
\end{equation}
and
\begin{equation}
\sum_{l=L_2}^{L_2+\eta-\xi+1} 2^{\eta-l}\leq \sum_{l=\infty}^{L_2+\eta-\xi+1} 2^{\eta-l} =2^{-L_2+\xi};\nonumber
\end{equation}
\item[$(e)$] is due to the assumptions that $L_1\geq -\log \epsilon-\eta$, and $L_2\geq -\log \epsilon+\xi$;
\item[$(f)$] is due to the fact that
\begin{align}
\log\left(1+\frac{E_{\mathsf{X}}}{E_{\mathsf{Z}}}\right)-\log\left(\frac{E_{\mathsf{X}}}{E_{\mathsf{Z}}}\right)
=\log\left(1+\frac{E_{\mathsf{Z}}}{E_{\mathsf{X}}}\right)\leq  \frac{E_{\mathsf{Z}}}{E_{\mathsf{X}}} \log e ,\nonumber
\end{align}
as $\log(1+\alpha)\leq \alpha \log e $ for any $\alpha\geq 0$;
\item[$(g)$] is due to the assumption that $\textrm{SNR}=E_{\mathsf{X}}/E_{\mathsf{Z}}\geq 1/\epsilon$.
\end{itemize}

Next, we show the result for $\hat{R}_1$. Observe that
\begin{align}
\hat{R}_1   &=\sum_{l=-L_1}^{L_2}\left[ H(p_l\otimes \tilde{q}_l)-H(\tilde{q}_l) \right]\nonumber\\
            &\stackrel{(h)}{\geq} \sum_{l=-L_1}^{L_2}\left[ H(p_l\otimes q_l)-H(\tilde{q}_l) \right]\nonumber\\
            &=\sum_{l=-L_1}^{L_2}\left[ H(p_l\otimes q_l)-H(q_l) \right]+\sum_{l=-L_1}^{L_2}\left[ H(q_l)-H(\tilde{q}_l) \right]\nonumber\\
            &=\hat{R}_2-\sum_{l=-L_1}^{L_2}\left[ H(\tilde{q}_l)-H(q_l) \right]\nonumber\\
            &\stackrel{(i)}{\geq}\hat{R}_2-\sum_{-L_1}^{\eta}\left[1-\left(1-2^{l-\eta}\log e \right)\right]-\sum_{\eta+1}^{L_2}\left[6(l-\eta)2^{-l+\eta}\log e-0\right]\nonumber\\
            &=\hat{R}_2-\sum_{-L_1}^{\eta} 2^{l-\eta} \log e -\sum_{\eta+1}^{L_2}6(l-\eta)2^{-l+\eta}\log e\nonumber\\
            &\stackrel{(j)}{\geq}\hat{R}_2- 2\log e -12\log e\nonumber\\
            &\stackrel{(k)}{\geq}\log \left(1+\frac{E_{\mathsf{X}}}{E_{\mathsf{Z}}}\right)-5\epsilon\log e-14\log e,\nonumber
\end{align}
where
\begin{itemize}
\item[$(h)$] is due to $\tilde{q}_l\geq q_l$, which further implies $p_l\otimes \tilde{q}_l\geq p_l\otimes q_l$;
\item[$(i)$] follows from \eqref{equ:AEN_Entropy_Bound2} and \eqref{equ:AEN_Equivalent_Entropy_Bound1}, together with the fact that $H(\tilde{q}_l)\leq 1$ and $H(q_l)\geq0$ for any $l$;
\item[$(j)$] follows from the observations that
\begin{equation}
\sum_{l=-L_1}^{\eta} 2^{l-\eta}\log e \leq \log e\cdot 2^{-\eta}\cdot\sum_{l=-\infty}^{\eta} 2^{l}=2\log e,\nonumber
\end{equation}
and
\begin{equation}
\sum_{l=\eta+1}^{L_2}6(l-\eta)2^{-l+\eta} \log e\leq 6\log e\cdot\sum_{l=\eta+1}^{\infty}(l-\eta)\cdot2^{-l+\eta}=12\log e;\nonumber
\end{equation}
\item[$(k)$] is due to \eqref{equ:AEN_Rate_Gap_Proof2}.
\end{itemize}

Thus, choosing $c=19\log e$ completes the proof. Note that, in the course of providing these upper bounds, the actual gap might be enlarged. The actual value of the gap is much smaller (e.g., as shown in Fig.~\ref{fig:AEN_Achievable_Rate}, numerical result for the capacity gap is around $1.72$ bits).


\section{Proof of Lemma~\ref{lem:ExpSC_Rate_Distortion}}
\label{app:ExpSC_Rate_Distortion}

Note that the maximum entropy theorem implies that
the distribution maximizing differential entropy over all probability densities $f$ on support set $\mathbb{R}^+$ satisfying
\begin{equation}
\int_{0}^{\infty}f(x)dx=1,\quad\int_{0}^{\infty}f(x)xdx=1/\lambda,\nonumber
\end{equation}
is exponential distribution with parameter $\lambda$. Based on this result, in order to satisfy $\mathbb{E}[d(\mathsf{X},\tilde{\mathsf{X}})]\leq D$, where $d(\mathsf{X},\tilde{\mathsf{X}})=\infty$ for $\mathsf{X}<\tilde{\mathsf{X}}$, we have to restrict $\mathsf{X}\geq \hat{\mathsf{X}}$ with probability $1$. To this end,  we have
\begin{align}
I(\mathsf{X};\tilde{\mathsf{X}})&=h(\mathsf{X})-h(\mathsf{X}|\tilde{\mathsf{X}})\nonumber\\
            &=\log(e/\lambda)-h(\mathsf{X}-\tilde{\mathsf{X}}|\tilde{\mathsf{X}})\nonumber\\
            &\geq \log(e/\lambda)-h(\mathsf{X}-\tilde{\mathsf{X}})\nonumber\\
            &\geq \log(e/\lambda)-\log(e\mathbb{E}[\mathsf{X}-\tilde{\mathsf{X}}])\nonumber\\
            &\geq \log(e/\lambda)-\log(eD)\nonumber\\
            &=-\log (\lambda D).\nonumber
\end{align}

To achieve this bound, we need $\mathsf{X}-\tilde{\mathsf{X}}$ to be exponentially distributed and independent with $\tilde{\mathsf{X}}$ as well. Accordingly, we can consider a test channel from $\tilde{\mathsf{X}}$ to $\mathsf{X}$ with additive noise $\mathsf{Z}=\mathsf{X}-\tilde{\mathsf{X}}$ distributed as exponential with parameter $1/D$, which gives the conditional distribution given by (\ref{equ:ExpSC_Optimal_Conditional_Distribution}).


\section{Proof of Theorem~\ref{thm:ExpSC_Achievable_Rate_Z}}
\label{app:ExpSC_Achievable_Rate_Z}

Due to decomposability of exponential distribution, the levels after expansion are independent, hence, the achievable rate in this theorem is obtained by additions of individual rates. On the other hand, for the calculation of distortion, we have
\begin{align}
D_1 &=\mathbb{E}[\mathsf{X}-\hat{\tilde{\mathsf{X}}}]\nonumber\\ &=\mathbb{E}\left[\sum_{l=-\infty}^{\infty}2^l\mathsf{X}_l-\sum_{l=-L_1}^{L_2}2^l\tilde{\mathsf{X}}_l\right]\nonumber\\
    &\overset{(a)}{=}\sum_{l=-L_1}^{L_2}2^ld_l +\sum_{l=L_2+1}^{\infty} 2^lp_l+\sum_{l=-\infty}^{-L_1-1}2^lp_l\nonumber\\
    &\overset{(b)}{\leq} \sum_{l=-L_1}^{L_2}2^ld_l +\sum_{l=L_2+1}^{\infty} 2^{-l+1}/\lambda^2 +\sum_{l=-\infty}^{-L_1-1}2^{l-1}\nonumber\\
    &\overset{(c)}{=} \sum_{l=-L_1}^{L_2}2^ld_l + 2^{-L_2+1}/\lambda^2+2^{-L_1-1},\nonumber
\end{align}
where
\begin{itemize}
\item[$(a)$] follows from $p_l-\tilde{p}_l=d_l$;
\item[$(b)$] follows from
\begin{equation}
p_l=\frac{1}{1+e^{\lambda 2^l}}\leq \frac{1}{1+(1+\lambda 2^l+\lambda^2 2^{2l}/2)}\leq\frac{1}{\lambda^2 2^{2l}/2}=2^{-2l+1}/\lambda^2,\nonumber
\end{equation}
and $p_l<1/2$ for any $l$;
\item[$(c)$] follows from
\begin{equation}
\sum_{l=L_2+1}^{\infty} 2^{-l}=2^{-L_2},\textrm{ and } \sum_{l=-\infty}^{-L_1-1}2^l=2^{-L_1}.\nonumber
\end{equation}
\end{itemize}


\section{Proof of Theorem~\ref{thm:ExpSC_Achievable_Rate_X}}
\label{app:ExpSC_Achievable_Rate_X}

By the design of coding scheme, if all higher levels are decoded as equivalence, then they must be encoded with one-sided distortion. Recall that for Z-channel, we have
\begin{equation}
\textrm{Pr}\{\mathsf{X}_l\neq \tilde{\mathsf{X}}_l\}=\textrm{Pr}\{\mathsf{X}_l=1,\tilde{\mathsf{X}}_l=0\}=d_l.\nonumber
\end{equation}
Hence, due to independence of expanded levels,
\begin{equation}
\alpha_l=\prod_{k=l+1}^{L_2}(1-d_k).\nonumber
\end{equation}
Then, at each level, the achievable rate is $R_{\textrm{Z},l}$ with probability $\alpha_l$ and is $R_{\textrm{X},l}$ otherwise. From this, we obtain the expression of $R_2$ given by the theorem. On the other hand, since in both cases we have $p_l-\tilde{p}_l=d_l$, the form of distortion remains the same.


\section{Proof of Theorem~\ref{thm:ExpSC_Main_Result}}
\label{app:ExpSC_Main_Result}

Denote $\gamma = -\log \lambda$, and $\xi = -\log D$. Then, from $D\leq 1/\lambda$, we have
\begin{equation}
\gamma+\xi\geq 0.\label{equ:ExpSC_Main_Proof1}
\end{equation}
By noting that $p_l$ and $d_l$ are both expanded parameters from exponential distribution, we have
\begin{align}
&p_l = \frac{1}{1+e^{\lambda 2^l}}=\frac{1}{1+e^{2^{l-\gamma}}},\nonumber\\
&d_l = \frac{1}{1+e^{2^l/D}}=\frac{1}{1+e^{2^{l+\xi}}}.\nonumber
\end{align}
Hence, $p_l$ is shifted version of $d_l$ (analog to the channel coding case), i.e.,
\begin{equation}
d_l = p_{l+\gamma+\xi}.\label{equ:ExpSC_Main_Proof2}
\end{equation}
Using this relationship, we obtain
\begin{align}
\sum_{l=-L_1}^{L_2}\left[H(p_l)-H(d_l)\right]
&\overset{(a)}{=}\sum_{l=-L_1}^{L_2}H(p_l)-\sum_{l=-L_1}^{L_2}H(p_{l+\gamma+\xi})\nonumber\\
&=\sum_{l=-L_1}^{L_2}H(p_l)-\sum_{l=-L_1+\gamma+\xi}^{L_2+\gamma+\xi}H(p_{l})\nonumber\\
&\overset{(b)}{=}\sum_{l=-L_1}^{-L_1+\gamma+\xi-1}H(p_l)-\sum_{l=L_2+1}^{L_2+\gamma+\xi}H(p_{l})\nonumber\\
&\overset{(c)}{\leq} \gamma+\xi,\label{equ:ExpSC_Main_Proof3}
\end{align}
where
\begin{itemize}
\item[$(a)$] follows from \eqref{equ:ExpSC_Main_Proof2};
\item[$(b)$] follows from \eqref{equ:ExpSC_Main_Proof1} and theorem assumptions;
\item[$(c)$] follows from $0\leq H(p_l)\leq 1$ for any $l$.
\end{itemize}

From the expression of $R_1$, we have
\begin{align}
R_1 &= \sum_{l=-L_1}^{L_2} \left[H(p_l)-(1-p_l+d_l)H\left(\frac{d_l}{1-p_l+d_l}\right)\right]\nonumber\\
&=\sum_{l=-L_1}^{L_2}\left[H(p_l)-H(d_l)\right] +\sum_{l=-L_1}^{L_2} \left[H(d_l)-(1-p_l+d_l)H\left(\frac{d_l}{1-p_l+d_l}\right)\right]\nonumber\\
&\leq \gamma+\xi +\sum_{l=-L_1}^{L_2} \left[H(d_l)-(1-p_l+d_l)H\left(\frac{d_l}{1-p_l+d_l}\right)\right],\label{equ:ExpSC_Main_Proof4}
\end{align}
where \eqref{equ:ExpSC_Main_Proof3} is used to obtain the last inequality. It remains to bound
\begin{align}
\Delta_l    &\triangleq H(d_l)-(1-p_l+d_l)H\left(\frac{d_l}{1-p_l+d_l}\right)\nonumber\\
            &=(1-p_l)\log(1-p_l)-(1-d_l)\log(1-d_l)-(1-p_l+d_l)\log(1-p_l+d_l).\nonumber
\end{align}
For this, two cases are considered:
\begin{enumerate}
\item[1)] For $l\leq -\xi$, $d_l$ and $p_l$ are close and both tend to $0.5$. More precisely, we have
    \begin{align}
    \Delta_l    &\overset{(d)}{\leq} -(1-p_l+d_l)\log(1-p_l+d_l)\nonumber\\
                &\overset{(e)}{\leq} 2(p_l-d_l)\log e\nonumber\\
                &\overset{(f)}{\leq} 2\left[\frac{1}{2}-\left(\frac12-2^{l+\xi-1}\right)\right]\log e \nonumber\\
                &=2^{l+\xi}\log e  ,\label{equ:ExpSC_Main_Proof5}
    \end{align}
    where
    \begin{itemize}
    \item[$(d)$] follows from the fact that $(1-\alpha)\log(1-\alpha)$ is a decreasing function over $[0,0.5]$, hence, $(1-p_l)\log(1-p_l)\leq(1-d_l)\log(1-d_l)$;
    \item[$(e)$] follows from the observation that $-(1-\alpha)\log(1-\alpha)\leq 2\log e\cdot\alpha$ for any $\alpha\in[0,0.5]$;
    \item[$(f)$] follows from the fact that $p_l\leq 0.5$ and
    \begin{equation}
    d_l = \frac{1}{1+e^{2^{l+\xi}}}\geq \frac{1}{1+(1+2\cdot 2^{l+\xi})}\geq \frac12-2^{l+\xi-1},\nonumber
    \end{equation}
    where the first inequality is due to $e^{\alpha}\leq 1+2\alpha$ for any $\alpha\in[0,1]$ ($2^{l+\xi}\leq 1$ due to $l\leq -\xi$), and the last inequality holds for any $l$.
    \end{itemize}
    \item[2)] On the other hand, for $l>-\xi$, $d_l$ tends to $0$, so as $1-p_l$ and $1-p_l+d_l$ get close. More precisely, we have
    \begin{align}
    \Delta_l    &\overset{(g)}{\leq} -(1-d_l)\log(1-d_l)\nonumber\\
                &\overset{(h)}{\leq} 2 d_l\log e\nonumber\\
                &\overset{(i)}{\leq} 2^{-l-\xi}\log e  ,\label{equ:ExpSC_Main_Proof6}
    \end{align}
    where
    \begin{itemize}
    \item[$(g)$] follows from the fact $(1-p_l)\log(1-p_l)\leq(1-p_l+d_l)\log(1-p_l+d_l)$;
    \item[$(h)$] follows from the observation that $-(1-\alpha)\log(1-\alpha)\leq 2\alpha\log e$ for any $\alpha\in[0,0.5]$;
    \item[$(i)$] follows from the fact that
    \begin{equation}
    d_l=\frac{1}{1+e^{2^{l+\xi}}}\leq \frac{1}{e^{2^{l+\xi}}}\leq \frac{1}{2\cdot2^{l+\xi}}=2^{-l-\xi-1},\nonumber
    \end{equation}
    where the second inequality holds from $e^{\alpha}\geq 2\alpha$ for any $\alpha>1$ ($2^{l+\xi}>1$ due to $l>-\xi$).
    \end{itemize}
\end{enumerate}
Putting \eqref{equ:ExpSC_Main_Proof5} and \eqref{equ:ExpSC_Main_Proof6} back to \eqref{equ:ExpSC_Main_Proof4}, we have
\begin{align}
R_1 &\leq \gamma+\xi+\log e\cdot\sum_{l=-L_1}^{-\xi}2^{l+\xi}+\log e\cdot\sum_{l=-\xi+1}^{L_2}2^{-l-\xi}\nonumber\\
    &\leq \gamma+\xi+2\log e+\log e\nonumber\\
    &=R(D)+3\log e,\label{equ:ExpSC_Main_Proof7}
\end{align}
where we use the definitions of $\gamma$ and $\xi$, such that $\gamma+\xi=R(D)$.

Finally, using the result from Theorem~\ref{thm:ExpSC_Achievable_Rate_Z} that
\begin{equation}
D\leq D_1\leq \sum_{l=-L_1}^{L_2}2^ld_l + 2^{-L_2+1}/\lambda^2+2^{-L_1-1}\leq D + 2^{-L_2+1}/\lambda^2+2^{-L_1-1},\label{equ:ExpSC_Main_Proof8}
\end{equation}
we obtain
\begin{align}
R(D)    &\overset{(j)}{\leq} R(D_1)+\frac{\log e}{D}(D_1-D)\nonumber\\
        &\overset{(k)}{\leq} R(D_1)+\frac{\log e}{D}(2^{-L_2+1}/\lambda^2+2^{-L_1-1})\nonumber\\
        &\overset{(l)}{\leq} R(D_1)+2.5\log e,\label{equ:ExpSC_Main_Proof9}
\end{align}
where
\begin{itemize}
\item[$(j)$] follows from $R(D)$ is convex such that for any $\beta$ and $\alpha$,
\begin{align}
R(\beta)\geq R(\alpha)+R'(\alpha)(\beta-\alpha),\nonumber
\end{align}
where $R'(\alpha)=-\log e/\alpha$ is the derivative, and setting $\alpha=D$, $\beta=D_1$ completes the proof of this step;
\item[$(k)$] follows from \eqref{equ:ExpSC_Main_Proof8};
\item[$(l)$] follows from theorem assumptions that $L_1\geq -\log D$ and  $L_2\geq -\log \lambda^2 D$.
\end{itemize}
Combining \eqref{equ:ExpSC_Main_Proof9} with \eqref{equ:ExpSC_Main_Proof7}, we have
\begin{equation}
R_1\leq R(D_1)+5.5\log e,\nonumber
\end{equation}
which completes the proof for $R_1$ and $D_1$ by taking $c=5.5\log e$.

For the other part of the theorem, observe that
\begin{align}
H\left( \frac{d_l}{1-2p_l+2d_l} \right)\geq (1-p_l+d_l)H\left( \frac{d_l}{1-p_l+d_l} \right).\nonumber
\end{align}
Hence, for any $-L_1\leq l\leq L_1$, we have $R_{\textrm{X},l}\leq R_{\textrm{Z},l}.$
Thus, we have $R_2\leq R_1$. Combing with the observation that $D_1=D_2$, we have $R_2\leq R(D_2)+5.5\log e$.


\end{document}